\newtheorem{theorem}{Theorem}
\newtheorem{lemma}[theorem]{Lemma}
\pgfplotsset{width=10cm,compat=1.9}
\tikzstyle{block} = [rectangle, draw, fill=white, rounded corners,
\tikzstyle{midblock} = [rectangle, draw, fill=white, rounded corners,
\tikzstyle{wideblock} = [rectangle, draw, fill=white, rounded corners,
\tikzstyle{xwideblock} = [rectangle, draw, fill=white, rounded corners,
\tikzstyle{xxwideblock} = [rectangle, draw, fill=white, rounded corners,
\tikzstyle{xxxwideblock} = [rectangle, draw, fill=white, rounded corners,
\colorlet{OAA}{blue}
\colorlet{QFT}{Salmon}
\colorlet{SP}{violet}
\colorlet{FABLE}{orange}
\begin{document}

\title{Quantum algorithms for solving a drift-diffusion equation: analysing circuit depths}

\author{Ellen Devereux}
\email{Ellen.Devereux@warwick.ac.uk}
\affiliation{Department of Physics, University of Warwick, Coventry, CV4 7AL, United Kingdom}
\affiliation{Fujitsu UK Ltd}


\author{Animesh Datta}
\affiliation{Department of Physics, University of Warwick, Coventry, CV4 7AL, United Kingdom}

\date{\today}

\begin{abstract}
We compare the circuit depths for five different gate sets to implement a quantum algorithm solving a \acl{DDE} in two spatial dimensions. Our algorithm uses diagonalisation by the \acl{QFT}. The gate sets are: An unconstrained gate set, the TK1 gate set from Quantinuum, the native gate sets of IBM Heron and IonQ, and Fujitsu's \ac{STAR} gate set. 
Our analysis covers a set of illustrative scenarios using up to 22 qubits.
We find that while scaling with spatial resolution aligns with theoretical predictions, scaling with spatial dimension is less efficient than theorised due to overhead from block encoding. Finally, using the \ac{STAR} gate set, we find that even minimal problem instances exceed the operational limits of current quantum hardware.
\end{abstract}

\keywords{Quantum Computing, Modelling, Partial Differential Equations}

\maketitle

\begin{acronym}
\acro{DDE}[DDE]{drift-diffusion equation}
\acro{QFT}[QFT]{quantum Fourier transform}
\acro{IQFT}[IQFT]{inverse \ac{QFT}}
\acro{FFT}[FFT]{fast Fourier transform}
\acro{IFFT}[IFFT]{inverse \ac{FFT}}
\acro{FTCS}[FTCS]{forward time centered space}
\acro{FABLE}{fast approximate block encoding}
\acro{OAA}{oblivious amplitude amplification}
\acro{STAR}{space-time efficient analog rotation}
\acro{NISQ}{noisy intermediate scale quantum}
\acro{PDE}{partial differential equation}
\end{acronym}
\section{Introduction}

Quantum computing holds promise for tackling computationally intensive problems~\cite{Dalzell2023QuantumComplexities}. Yet significant challenges remain in bridging the gap between theorised capabilities and practical hardware implementations. Resource analyses have been made for various algorithms, such as Shor's algorithm~\cite{Gheorghiu2025QuantumAlgorithms, Gidney2021HowQubits, Yamaguchi2023ExperimentsSimulator}, which promise exponential computational advantage on fault-tolerant quantum computers. Other studies focus on near-term applications for \ac{NISQ} computers~\cite{Katabarwa2024EarlyComputing, Kiss2025EarlyEstimation}.
The latter take into account the resources required for error correction of near-term devices.

To bridge the gap between the theoretical promises and the current realities of quantum computing, this paper performs a practical analysis of the total number of operations required to solve the \ac{DDE}.
The \ac{DDE} is a fundamental mathematical model with broad applicability, ranging from modelling wind turbine power outputs~\cite{Arenas-Lopez2020AOutput} to pricing financial options~\cite{Montanaro2015QuantumMethods, Andersen2001TheVolatility, Wang2023ModelingProcess}. Its widespread use underscores the importance of efficient solution algorithms. Quantum computing offers a theoretical computational advantage for solving the DDE~\cite{Novikau2024QuantumSystems, Lubasch2025QuantumSpace, Brearley2024QuantumSimulation, Devereux2025QuantumEquation, NovikauExplicitEquation, Over2025QuantumOperator}, making it a compelling candidate for quantum acceleration.

We focus on the most efficient quantum algorithm known -- the diagonalisation via \ac{QFT}~\cite{Devereux2025QuantumEquation} and analyse its requirements against current device capabilities. 
Our analysis accounts for practical circuit implementation using theoretical as well as native universal gate sets. We define a native gate set to be one implemented on a real-world quantum hardware. We consider five gate sets: an unconstrained gate set that allows any possible one, two or three qubit gate, the tket theoretical gate set from Quantinuum~\cite{Sivarajah2020Tket:Devices}, the IBM Heron native gate set~\cite{ProcessorDocumentation, Montanez-Barrera2025EvaluatingDepth}, Fujitsu's \acf{STAR} native gate set~\cite{Toshio2025PracticalComputer} and the IonQ native gate set~\cite{NativeDocumentation}. These gate sets are detailed in~\cref{tab: gate sets}.
We note that previous work benchmarking quantum computers found the IBM Heron class of quantum computers to be the highest performing in terms of number of coherent operations~\cite{Montanez-Barrera2025EvaluatingDepth}.
We quantify the circuit depth 
for a set of illustrative scenarios up to 22 qubits. A summary of the scenarios studied herein 
is in~\cref{tab:scenarios}. 

We distil four main results: 
(i) The circuit depth scales as expected with the number of spatial discretisation steps in one dimension. This confirms one of the primary findings of the complexity analysis of the best-known quantum algorithm for solving the \ac{DDE} - diagonalisation via quantum Fourier transform
(QFT)~\cite{Devereux2025QuantumEquation}.
(ii) While theoretical gate assemblies (such as tket~\cite{Sivarajah2020Tket:Devices}) are most efficient in circuit depth, the \ac{STAR}~\cite{Toshio2025PracticalComputer} gate set is the most efficient \textit{native} gate set we tested.
(iii) Circuit depth scales far more steeply with spatial dimension than theorised due to the overhead of the \ac{FABLE}~\cite{Camps2022FABLE:Block-Encodings} subroutine for block encoding. This discrepancy underscores the gap between theoretical complexity and practical quantum implementation. (iv) Even for minimal problem instances, the required circuit depths 
exceed the capabilities of current quantum computing hardware.
A complete resource estimation for fault-tolerant implementations is reserved for future work, when error-corrected circuit generators become more readily available.

The remainder of this paper is structured as follows: \cref{ssec: problem} introduces the \ac{DDE} problem statement and outlines the solution algorithms, including an analytic solution, the classical diagonalisation algorithm, and the quantum diagonalisation algorithm~\cite{Devereux2025QuantumEquation}  
\cref{sec: build} details the construction of the quantum circuit using current best practices. Circuit simulation and validation against analytic and classical solutions are presented in \cref{sec: results}. In \cref{Sec: res_est}, we provide a comprehensive analysis of circuit depth across five different gate sets, with a deeper subroutine analysis focusing on the \ac{STAR} gate set~\cite{Toshio2025PracticalComputer} and the tket gate set~\cite{Sivarajah2020Tket:Devices}. Finally, in~\cref{sec:discuss} we compare our findings to the capabilities of current quantum computers and discuss the outcomes. 

\subsection{Problem Statement}
\label{ssec: problem}
The \acl{DDE} we solve in this paper is defined as 
\begin{equation}
    \frac{\partial p(\mathbf{x}, t)}{\partial t}
    = \sum_{i = 1}^d \bigg[ a \frac{\partial}{\partial x_i} [p(\mathbf{x}, t)] + D \frac{\partial^2}{\partial x_i^2} [p(\mathbf{x}, t)]\bigg],
    \label{eqn: DDE}
\end{equation}
where $\mathbf{x} = \{x_1,\cdots,x_d\}~\in \mathbb{R}^d$ is a vector and $a$ and $D$ are positive constants representing the drift and diffusion coefficients, respectively. 
We seek an approximate solution, $\Tilde{\Tilde{p}}(\mathbf{x},t)$, of $p(\mathbf{x},t)$ from \cref{eqn: DDE} up to a given error $\epsilon \in (0,1)$ at a time $t = T$ by
\begin{equation}
    \big|\big|\Tilde{\Tilde{p}}(\mathbf{x}, t) - p(\mathbf{x}, t) \big| \big|_{\infty}\leq \epsilon  
    \label{eqn: approx_definition}
\end{equation}
for $\mathbf{x} \in [-L,L]^d$. The infinity norm is defined as $||\mathbf{x}||_{\infty}=\textrm{max}_j|x_j|$.
The solution $p(\mathbf{x}, t)$ is positive   
and $\int_{-L}^L p(\mathbf{x}, t) d\mathbf{x} = 1$ at all times \cite{Risken1996Fokker-PlanckEquation}. $p(\mathbf{x}, t)$ is also dimensionless.
We assume periodic boundary conditions in all spatial dimensions $x_j$ but not in $t$ such that $p(L, t) = p(- L, t)$. We also require that $p(\mathbf{x}, t)$ is four times differentiable~\cite{Devereux2025QuantumEquation}.

We discretise \cref{eqn: DDE} on a grid referred to as $G$, which spans $[-L, L]^d$ in space and $[0, T]$ in time.
This grid is discretised into $n_x$ equally spaced points in each of the $d$ spatial dimensions ($n_x$ must be even) and $n_t$ equally distributed points in time. 
The spacing of these points is $\Delta x = 2L/n_x$ in all space dimensions and $\Delta t = T/n_t$.

We use the \ac{FTCS} discretisation strategy to produce the approximate solution 
\begin{align}
        \tilde{p}(\mathbf{x}, t + \Delta t)  &= \mathcal{L}\tilde{p}(\mathbf{x}, t)  \nonumber\\ \nonumber
      &=  	\bigg(1 - \frac{2d D\Delta t}{\Delta x^2}\bigg) \tilde{p}(\mathbf{x}, t) + \Delta t \sum_{j = 1}^d \bigg ( \frac{a }{2\Delta x}(\tilde{p}(...,x_{j[i]} + \Delta x,...,t) - \tilde{p}(...,x_{j[i]} - \Delta x,...,t))
        \\&+ \frac{D}{\Delta x^2}(\tilde{p}(...,x_{j[i]} + \Delta x,...,t)
    + \tilde{p}(...,x_{j[i]} - \Delta x,...,t))
        \bigg)
        \label{eqn: diff operator}
    \end{align}
to define $\mathcal{L}$ as a linear operator of dimension $n_x^d \times n_x^d$. The error introduced by this discretisation is denoted as $\epsilon_c$,
\begin{equation}
    ||\Tilde{p}(\mathbf{x}, t) - p(\mathbf{x}, t)||_{\infty} \leq \epsilon_c.
    \label{eqn: def_epsc}
\end{equation}
For the quantum algorithm, we use Dirac notation such that the approximate probability distribution $\tilde{p}(\mathbf{x}, t)$ is encoded in the quantum state 
\begin{equation}
        |\tilde{p}\rangle = \frac{1}{||\tilde{p}(\mathbf{x}, t)||_{2}}\sum_{(\mathbf{x},t)\in G} \tilde{p}(\mathbf{x},t)|\mathbf{x},t\rangle.
        \label{eqn: quantum_state}
    \end{equation}   
We can then encode this solution using $q = d\log n_x$ qubits. As we can only measure the above quantum state to a finite error,  the extracted approximated probability distribution $\tilde{\tilde{p}}(\mathbf{x}, t)$ is such that
     \begin{equation}
        ||\tilde{\tilde{p}}(\mathbf{x}, t) - \tilde{p}(\mathbf{x}, t)||_{\infty}\leq \epsilon_q
        \label{eqn: def_epsq}
    \end{equation}
The error in approximating the quantum state is thus $\epsilon_q,$ and the overall error is $\epsilon = \epsilon_c + \epsilon_q$ as in \cref{eqn: approx_definition}.  
We will use the most efficient quantum algorithm for the solution identified by us in Ref.~\cite{Devereux2025QuantumEquation} to solve the \ac{DDE}. We restate it here for completeness.

\begin{theorem}[Quantum diagonalisation {\cite[Theorem 21] {Devereux2025QuantumEquation}}] There is a quantum algorithm that estimates the probability distribution $p(\mathbf{x})$ from Eqn.~\eqref{eqn: DDE} such that  $||\Tilde{\Tilde{p}}(\mathbf{x}, T) - p(\mathbf{x}, T)||_{\infty} \leq \epsilon_c +\epsilon_q $ for all $(\mathbf{x}) \in G$ at a fixed $T=n_t\Delta t$ with $99\% $ success probability in time 
   \begin{equation}
        \Tilde{O}\bigg(\frac{d^2n_x^{d/2}}{\epsilon_q} \bigg) = \Tilde{O}\bigg( \frac{d^{(d/2+2)}T^{d/2}\zeta^{d/4} (aL + D)^{d/2}}{ \epsilon_q\epsilon_c^{d/4}}\bigg).
        \label{eqn: qft_comp}
    \end{equation}
    \label{thm: QFT}
\end{theorem}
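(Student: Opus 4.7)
The plan is to diagonalise the \ac{FTCS} update operator $\mathcal{L}$ from \cref{eqn: diff operator} by the \ac{QFT}. Since $\mathcal{L}$ applies a constant-coefficient finite-difference stencil on a periodic $d$-dimensional grid, it is a tensor product of one-dimensional circulant matrices and is therefore simultaneously diagonalised by $(\mathrm{QFT})^{\otimes d}$. First I would write down the eigenvalues $\lambda_{k_1,\dots,k_d}$ explicitly from the stencil and verify that under the usual \ac{FTCS} stability condition $|\lambda_{\mathbf{k}}|\le 1$, so that the $n_t$-th powers used to propagate over $n_t$ explicit time steps remain bounded.

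The quantum circuit I would build is $(\mathrm{QFT}^{-1})^{\otimes d}\,\Lambda_{n_t}\,(\mathrm{QFT})^{\otimes d}$ applied to the encoded initial condition from \cref{eqn: quantum_state}, where $\Lambda_{n_t}$ is the diagonal operator with entries $\lambda_{\mathbf{k}}^{n_t}$. The \ac{QFT} layers cost $\tilde O(d\log^2 n_x)$ gates. The non-unitary $\Lambda_{n_t}$ is implemented by first computing the scaled eigenvalue into an ancilla register via reversible arithmetic on the $d$ Fourier-index registers, then block encoding it (for example via \ac{FABLE}), with the subnormalisation restored by \ac{OAA}. The $d^2$ prefactor in \cref{eqn: qft_comp} absorbs both the linear-in-$d$ cost of evaluating the $d$-term stencil and the linear-in-$d$ cost of the amplification step.

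To extract $\tilde{\tilde p}(\mathbf{x},T)$ to infinity-norm accuracy $\epsilon_q$ with at least $99\%$ probability, I would use amplitude estimation on each grid point of interest. Because $|\tilde p\rangle$ spreads unit $\ell_2$ norm over $n_x^d$ sites, typical amplitudes scale as $n_x^{-d/2}$, and amplitude estimation therefore needs $\tilde O(n_x^{d/2}/\epsilon_q)$ uses of the circuit to reach additive error $\epsilon_q$ as in \cref{eqn: def_epsq}. Multiplying this by the per-circuit cost yields the first expression in \cref{eqn: qft_comp}. For the second expression I would insert the \ac{FTCS} truncation bound of \cref{eqn: def_epsc}: the $O(\Delta t + \Delta x^2)$ local error, combined with the four-times-differentiability constant bundled into $\zeta$ and the \ac{FTCS} stability condition $\Delta t = O(\Delta x^2/(dD))$, forces $n_x = \tilde O\bigl(d^{1/2}T^{1/2}\zeta^{1/4}(aL+D)^{1/2}/\epsilon_c^{1/2}\bigr)$, and raising to the $d/2$ power reproduces the displayed dependence on $T$, $\zeta$, $aL+D$ and $\epsilon_c$.

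The main obstacle is the careful bookkeeping of the block-encoding subnormalisation and of the compounded success probabilities. \ac{FABLE} introduces a subnormalisation that depends on $d$ and, through the $n_t$-th-power eigenvalues, on the time horizon; \ac{OAA} must then be invoked enough times to restore unit subnormalisation without eating into the $99\%$ confidence budget distributed across the many amplitude-estimation repetitions. Getting the exponent of $d$ right in \cref{eqn: qft_comp}, and matching the success-probability accounting of \cite[Theorem 21]{Devereux2025QuantumEquation} exactly, is where the real work lies.
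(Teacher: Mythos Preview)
Your algorithmic skeleton matches the paper exactly: diagonalise the circulant $\mathcal{L}$ by $(\mathrm{QFT})^{\otimes d}$, block-encode the diagonal $\Lambda^{n_t}$ (the paper likewise uses \ac{FABLE}), repair the subnormalisation with \ac{OAA}, invert the \ac{QFT}, and read out. These are precisely the six stages of \cref{fig: AA_diag}, so structurally you are on the same track.

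The gap is in the complexity bookkeeping, and it is not merely a matter of tidying exponents. You place the $n_x^{d/2}$ factor in the measurement step, reasoning that typical amplitudes of $|\tilde p\rangle$ scale like $n_x^{-d/2}$, and you assert that \ac{OAA} contributes only a ``linear-in-$d$'' cost. In the paper's accounting the attribution is the opposite. After block-encoding, the success amplitude is $\|\mathcal{L}^{n_t}|p_0\rangle\|_2$, and \cite[Lemma~12]{Devereux2025QuantumEquation} bounds this from below only by $(4\sqrt{n_t})^{-d/2}$; since $n_t\propto n_x^2$ this forces $N_{\mathrm{OAA}}=O(n_x^{d/2})$ amplification rounds---exponential in $d$, not linear. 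The remaining $1/\epsilon_q$ then enters through the read-out protocol of \cite{VanApeldoorn2021QuantumEstimation}, which extracts the \emph{entire} distribution to $\ell_\infty$ accuracy in one pass; your ``amplitude estimation on each grid point of interest'', taken literally, would pick up an additional $n_x^d$ factor from the $n_x^d$ grid points and overshoot \cref{eqn: qft_comp}. So although you correctly flag the subnormalisation as the delicate issue, the concrete claim that amplification is cheap and read-out carries the $n_x^{d/2}$ is backwards, and with your current accounting the first equality in \cref{eqn: qft_comp} would not close.
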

Our algorithm requires $q = d\log_2 n_x$ qubits to encode the system and assumes one ancilla qubit for amplitude amplification. Therefore, the theorised circuit width, $w$, for this application is $w = q+1 = d\log n_x +1$. We also note that $n_t \propto n_x^2$~\cite[Corollary 2]{Devereux2025QuantumEquation}.
Our algorithm comprises of six stages, also depicted in~\cref{fig: AA_diag}:
\begin{enumerate}
    \item State Preparation: Load the initial condition into the quantum computer.
    \item Diagonalisation: Diagonalise the problem using the \ac{QFT}.
    \item Eigenvalue Application: Apply the eigenvalues of the operator $\mathcal{L}$ using a diagonal matrix $\Lambda$.
    \item Amplitude Amplification: Amplify the amplitude of the result to increase the probability of success.
    \item Inverse QFT: Apply the \ac{IQFT}.
    \item Measurement: Measure the resulting state to obtain the solution.
\end{enumerate}

Throughout this paper, we will use the analytic solution in one dimension and the classical diagonalisation algorithm~\cite[Theorem 10]{Devereux2025QuantumEquation} for validation of our quantum algorithm.
The analytic solution of~\cref{eqn: DDE} in one spatial dimension is~\cite{Risken1996Fokker-PlanckEquation}
\begin{equation}
        p(x,t|x_0,t_0) =  \dfrac{1}{\sqrt{4\pi D(t-t_0) + 2\pi}} e^{-\dfrac{(a(t-t_0)+(x-x_0))^2}{4D(t-t_0)+2}}.
    \label{eqn: analytic}
\end{equation}
For validation in higher dimensions, we use the classical diagonalisation by \ac{FFT}~\cite[Theorem 10]{Devereux2025QuantumEquation}. This follows the same process as the quantum algorithm but uses the \ac{FFT} and \ac{IFFT} instead of the \ac{QFT}. 

\begin{theorem}[Classical diagonalisation by fast Fourier transform {\cite[Theorem 10]{Devereux2025QuantumEquation}}] There is a classical algorithm that outputs an approximate solution $\tilde{p}(\mathbf{x}, t)$ such that $||\tilde{p}(\mathbf{x}, t) - p(\mathbf{x}, t)||_{\infty} \leq \epsilon_c$ for all $(\mathbf{x}, T)\in G$ at final time $T$, in time
    \begin{equation}
        \tilde{O}(dn_x^d) = \tilde{O}\bigg(\frac{d^{d/2+1}(T\zeta)^{d/2}L^d(aL + D)^d}{\epsilon_c^{d/2} D^{d/2}}\bigg).
    \end{equation}
    \label{thm: Classical_FFT}
\end{theorem}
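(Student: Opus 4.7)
The plan is to combine a runtime analysis of the FFT-based algorithm with a truncation error analysis of the FTCS scheme, then substitute the latter into the former. The algorithm itself mirrors the quantum procedure of \cref{thm: QFT} but uses the classical FFT in place of the QFT and never needs amplitude amplification or state preparation subroutines, so three implementation steps suffice: (1) apply a $d$-dimensional FFT to the initial vector $\tilde{p}(\mathbf{x},0)$ sampled on $G$; (2) multiply pointwise by $\Lambda^{n_t}$, where $\Lambda$ is the diagonal matrix of eigenvalues of $\mathcal{L}$; (3) apply the inverse FFT. Under the periodic boundary conditions of \cref{ssec: problem}, $\mathcal{L}$ from \cref{eqn: diff operator} is a sum of tensor products of shift operators, hence jointly diagonalised by the $d$-dimensional DFT, which legitimises step (2).

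For the runtime, the $d$-dimensional FFT on $n_x^d$ points runs in $O(d\, n_x^d \log n_x)$; the pointwise eigenvalue application and its $n_t$-th power cost $O(n_x^d \log n_t)$ with $n_t \propto n_x^2$ by \cite[Corollary 2]{Devereux2025QuantumEquation}; and the inverse FFT matches the forward FFT. Summing yields the first form $\tilde{O}(d\, n_x^d)$ of the stated bound.

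For the error, I would Taylor-expand the FTCS stencil in \cref{eqn: diff operator} to obtain a per-step local truncation error of order $\Delta t(\Delta t + \Delta x^2)$ in each of the $d$ coordinate directions, with a prefactor controlled by $\zeta$ (the derivative bound coming from the four-times-differentiability assumption) and by the drift/diffusion coefficients through factors that aggregate as $(aL+D)$. Summing over $d$ directions and over $n_t = T/\Delta t$ time steps gives a global bound of the form $||\tilde{p}-p||_\infty = O(dT(\Delta t + \Delta x^2)\,\zeta\,(aL+D)^2)$ once the dependence on $a$, $D$ and $L$ is carried through explicitly. Stability of FTCS in $d$ dimensions forces $\Delta t \lesssim \Delta x^2/(dD)$, so setting $\Delta t$ at that scale lets me equate the global bound to $\epsilon_c$ and solve for $\Delta x$, hence for $n_x = 2L/\Delta x$. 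Substituting this $n_x$ into $\tilde{O}(d\, n_x^d)$ reproduces the second expression in the theorem.

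The main obstacle is the error analysis: tracking how the drift/diffusion coefficients and the regularity constant $\zeta$ propagate through the Taylor remainder of a $d$-dimensional stencil, and verifying that the CFL-type stability constraint saturates as assumed so that the final substitution produces exactly the exponents $(aL+D)^d$ and $D^{-d/2}$ shown. Since \cite[Theorem 10]{Devereux2025QuantumEquation} records precisely this bookkeeping, I would cite that derivation rather than redo it here, and concentrate the presentation on verifying that the three-step classical algorithm above achieves the runtime side of the bound.
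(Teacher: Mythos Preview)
The paper does not prove this theorem at all: it is quoted verbatim as \cite[Theorem 10]{Devereux2025QuantumEquation} and used purely as a validation benchmark, with no accompanying argument. Your proposal therefore already goes well beyond what the paper provides, and your closing remark---that the detailed $(aL+D)^d$, $D^{-d/2}$, $\zeta^{d/2}$ bookkeeping should be deferred to the cited reference---is exactly how the paper handles it. The runtime sketch you give (FFT $+$ pointwise $\Lambda^{n_t}$ $+$ inverse FFT, totalling $\tilde O(d\,n_x^d)$) is correct and matches the informal description surrounding the theorem statement.
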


Amplitude amplification is not required in the classical case.

In the following section, we build the quantum circuit for our algorithm in \cref{thm: QFT}. We introduce each of the subroutines and comment on their expected scaling.

\label{sec: intro}

\section{Building the circuit}

The quantum diagonalisation by \ac{QFT} algorithm is made up of six stages, as enumerated after~\cref{thm: QFT}.
Our resultant quantum circuit is drawn in \cref{fig: AA_diag}.
We now discuss the available subroutines for each of these steps.

\begin{figure}
    \centering
\begin{tikzpicture}
 \node (a) at (0,0)
 {
\begin{quantikz}
 \lstick{{AA flag: $\ket{0}$}} & & & & \targ{} & \gate[wires=4, style={OAA!30}]{\text{OAA}}\slice[style=gray, label style={label={[xshift=-3.5em, yshift=12em, black]above:(4)}, pos=1, anchor = north}]{$\ket{s_4}$} & & \meter{} & \cw & 1 \\
 \lstick{{flag: $\ket{0}$}} & & & \gate[wires=3, style={FABLE!70}]{\mathrm{FABLE}}\slice[ style=gray, label style={label={[xshift=-3em, yshift=12em, black]above:(3)}, pos=1, anchor = north}]{$\ket{s_3}$} & \ctrl[open]{-1}& \hphantom{\mathrm{OAA}} & & \meter{} & \cw & 0 \\
 \lstick{{zeros: $\ket{0}^{\otimes q}$}} & & & \hphantom{\mathrm{FABLE} } & \ctrl[open]{-2}& \hphantom{\mathrm{OAA}} & & \meter{} & \cw & 0 \\
 \lstick{{state: $\ket{0}^{\otimes q}$} } & \gate[style = {SP!50}]{ \mathrm{State \,Prep} \, }\slice[style=gray, label style={label={[xshift=-4em, yshift=12em, black]above:(1)}, pos=1, anchor = north}]{$\ket{s_1}$} & \gate[style={QFT!70}]{ \mathrm{QFT}}\slice[style=gray, label style={label={[xshift=-2.5em, yshift=12em, black]above:(2)}, pos=1, anchor = north}]{$\ket{s_2}$} & \hphantom{\mathrm{FABLE}} & & \hphantom{\mathrm{OAA}}& \gate[style={QFT!70}]{ \mathrm{QFT}^{\dag}}\slice[style=gray, label style={label={[xshift=-2.5em, yshift=12em, black]above:(5)}, pos=1, anchor = north}]{$\ket{s_5}$} & \meter{}\slice[style=gray, label style={label={[xshift=-2 em, yshift=12em, black]above:(6)}, pos=1, anchor = north}]{} & \cw & \dfrac{\tilde{\tilde{p}}_{n_t}}{||p_{n_t}||_2}
\end{quantikz}
 };
 \node (b) at (a.south) [anchor=north,xshift=-2cm, yshift=0.5cm]
 {
 \begin{tikzpicture}[y=-1cm]
 \draw[<-] (1,2.5) -- (1,3) node[below] {$p_0$}; 
 \draw[<-] (4.5,2.5) -- (4.5,3) node[below] {$\Lambda^{n_t}$}; 
 \end{tikzpicture}
 };

\end{tikzpicture}
    \caption{The quantum circuit for solving the \ac{DDE} using the diagonalisation by \ac{QFT} method from~\cref{thm: QFT}. 
    Comprised of the six stages noted in~\cref{sec: intro}. The intermediate states $\ket{s_i}$ are described as follows: $\ket{s_1} = \ket{0}\ket{0}\ket{0}^{\otimes q}\ket{p_0}$, $\ket{s_2} = \ket{0}\ket{0}\ket{0}^{\otimes q}\ket{\psi_0}$,  where $\ket{\psi_0}$ is defined in~\cref{eq: psi0}, 
    and $\ket{s_3}$, $\ket{s_4}$ and $\ket{s_5}$ are defined in~\cref{eq:s3,eq:s4,eq:s5} respectively.
    The state preparation and the \acs{FABLE} subroutines encode classically computed values denoted with arrows. The width of the circuit is $w=2q +2 = 2 d\log n_x + 2$.}
    \label{fig: AA_diag}
\end{figure}

(1) 
The first stage of the quantum circuit is state preparation. This step builds the state $\ket{p_0}$ on the state register of $q$ qubits. Ref. \cite{Araujo2023QuantumLibrary} carries out an analysis of state preparation algorithms. We chose to implement the most efficient of these algorithms, the low-rank state preparation subroutine~\cite{Araujo2024Low-RankPreparation}. 
An analysis of its circuit depth, compared to built-in algorithms from tket~\cite{Sivarajah2020Tket:Devices} and the theoretical complexity, is provided in \cref{app: SP}.
The contribution of the state preparation subroutine to the overall circuit and analysis will be depicted in~\cref{fig:gate sets d=1and2} using \textcolor{SP}{mauve pentagons}.

(2) 
The second stage of the circuit is the \ac{QFT}. This transforms the initial condition $\ket{p_0}$ to the state 
\begin{equation}
    \ket{\psi_0} = \mathcal{F}\ket{p_0}.
    \label{eq: psi0}
\end{equation}
The complexity of the \ac{QFT} scales as $O(q^2)$~\cite{Nielson2010QuantumInformation, PfefferMultidimensionalTransformation, Musk2020AComputations}. Although a reduced circuit depth is possible using an approximate version~\cite{Nam2020ApproximateGates}, we utilise the exact \ac{QFT} as it is not a limiting factor in our algorithm. 
The contribution of the \ac{QFT} subroutine to the overall circuit and analysis will be depicted in~\cref{fig:gate sets d=1and2} using \textcolor{QFT}{pink circles}.

(3) 
The third step in our quantum circuit is to efficiently map the eigenvalues of the operator $\mathcal{L}$ onto the state $\ket{\psi_0}$. We do this using the operator $\Lambda$, which is a diagonal matrix of the eigenvalues of $\mathcal{L}$.
The eigenvalues are ~\cite[Eqn. 32]{Devereux2025QuantumEquation}
\begin{equation}
        l_j = 1 - \frac{4Dd\Delta t}{\Delta x^2}\sin^2\bigg(\frac{\pi j}{n_x}\bigg) + \mathrm{i} \frac{a d\Delta t }{\Delta x}\sin\bigg(\frac{2\pi j}{n_x}\bigg)
        \label{eqn: eigen_L_full}
    \end{equation}
and we compute these classically. 
$\mathcal{L}$ captures a single time step $\Delta t$. To evolve the solution to the chosen time $T$, we apply $\Lambda^{n_t}$.
Since $\Lambda^{n_t}$ is non-unitary, it must be block encoded within a unitary operator. We use the \ac{FABLE} subroutine to that end~\cite{Camps2022FABLE:Block-Encodings}.
It implements an approximate block encoding of any given matrix. This is achieved using $q + 1$ ancillary qubits by the circuit depicted in \cref{fig: fable example}, where $q = d\log_2 n_x$ is the number of qubits used to encode the state. The block-encoding produces the state $\ket{\psi_{n_t}}$ with success probability $||{\Lambda^{n_t}}||^2$. More specifically, the state of the circuit after stage (3) is 
\begin{equation}
    \ket{s_3} = \ket{0}^{\otimes(q +2)}\otimes \tilde{\Lambda} ^{n_t}\ket{\psi_{0}} + \sqrt{1-||\tilde{\Lambda} ^{n_t}||^2}\ket{\sigma^{\perp}} = \ket{0}^{\otimes(q +2)}\otimes \ket{\psi_{n_t}} + \sqrt{1-||\tilde{\Lambda} ^{n_t}||^2}\ket{0}\ket{\sigma^{\perp}}
    \label{eq:s3}
\end{equation}
where $\ket{\sigma^{\perp}}$ is a normalised state orthogonal to the ancilla register in the state $\ket{0}^{\otimes q}$. \ac{FABLE} is particularly useful because it implements an approximate block encoding. Since we do not require accuracy beyond the discretisation threshold set by ~\cref{eqn: def_epsc} 
, we choose to implement \ac{FABLE} such that
\begin{equation}
  ||\tilde{\Lambda}^{n_t}-\Lambda^{n_t}||_2 \leq \epsilon_c.
\end{equation}

The number of operations to implement the \ac{FABLE} subroutine is upper-bounded as $O(n_x^{2d})=O(4^q)$~\cite{Camps2022FABLE:Block-Encodings}. While the approximation to $\epsilon$ reduces the number of operations, the extent of this reduction remains an open question. We fix $\epsilon$ for our analysis; investigating its effect would make an interesting future research direction.  The contribution of the \ac{FABLE} subroutine to the overall circuit and analysis will be depicted in~\cref{fig:gate sets d=1and2} using \textcolor{FABLE}{orange triangles}.

\begin{figure}
\begin{quantikz}
    \lstick{flag: $\ket{0}$} &    \gategroup[3,steps=4,style={dashed,
 rounded corners,fill={FABLE!70}, inner xsep=2pt},
 background]{{\sc \acl{FABLE}}} &  \gate[2]{\mathrm{O_{\Lambda^{n_t}}} }  &    &      & \meter{} &  \rstick{$0$}\cw \\
    \lstick{zeros: $\ket{0}^{\otimes q}$} &  \gate{H^{\otimes n}}  &  \hphantom{\mathrm{O_{\Lambda^{n_t}}}}   &  \swap{1} &  \gate{H^{\otimes n}} &  \meter{} & \rstick{$0$} \cw \\
    \lstick{state: $\ket{\psi_0}$} &    &   &  \targX{} &   & \rstick{$\dfrac{{\Tilde{\Lambda}^{n_t}}\ket{\psi_0}}{||{\Lambda^{n_t}}\ket{\psi_0}||_2}$}  
\end{quantikz}
         \caption{The \ac{FABLE} circuit~\cite{Camps2022FABLE:Block-Encodings} from step (3) in~\cref{fig: AA_diag} for our problem. 
         It takes a state $\ket{\psi_0}$ and applies the matrix $\Lambda^{n_t}$ via block encoding, where $n_t$ is the number of time discretisation steps, where the state is encoded in $q$ qubits. The circuit width is $w = 2q+1$}
    \label{fig: fable example}
\end{figure}
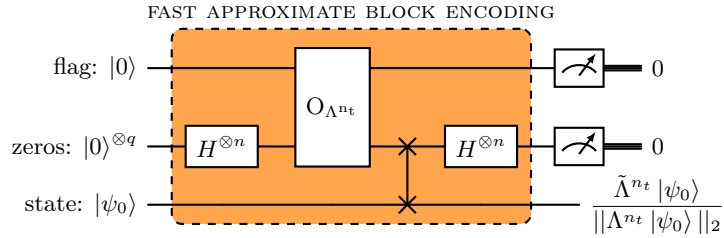

(4) 
To increase the probability of success of measuring the state we have prepared in the previous step, we must amplify the amplitude. 
Amplitude amplification \cite{Brassard2002QuantumEstimation} is a subroutine which can increase the probability of measuring the state or states of interest by amplifying the amplitude of those states. 
This is a simple approach that requires repeating every step of the circuit that was used to build the state of interest. An alternative subroutine is \ac{OAA}~\cite{BerryExponential}.
Here, we use an extra ancilla to perform \ac{OAA}; this allows us to rotate about an unknown state without repeating all the steps taken to build that state.
\ac{OAA} is a technique particularly suited to block encoding scenarios.  

\ac{OAA} applies the unitary operator $\mathcal{Q}_{OAA}$. 
We achieve this by utilising the \textit{AA flag} qubit, which is $\ket{1}$ only when the zero and flag registers of the \ac{FABLE} circuit are $\ket{0}$. Therefore, we do not require any knowledge of the state $\ket{\psi_{n_t}}$ to amplify it. We depict the circuit for \ac{OAA} in \cref{fig: OAA_diag}. The \ac{OAA} circuit block is repeated $j$ times where $ j = O(||\mathcal{L}^{n_t}|p_0\rangle||_2^{-1}) \leq (4\sqrt{n_t})^{d/2} \equiv N_{\text{OAA}}$ times~\cite[Lemma 12]{Devereux2025QuantumEquation}. 
The output state from this stage of the circuit is 
\begin{align}
    \ket{s_4} &= \mathcal{Q}_{OAA}^{N_{\text{OAA}}}\ket{s_3} \nonumber\\
    &= \sin(\dfrac{(2N_{\text{OAA}}+1)}{2}\theta)\ket{1}\ket{0}^{\otimes q+1}\ket{\psi_{n_t}} + \cos(\dfrac{(2N_{\text{OAA}}+1)}{2}\theta)\ket{0}\ket{\sigma ^{\perp}}\nonumber\\
    &= \alpha \ket{1}\ket{0}^{\otimes q+1}\ket{\psi_{n_t}} + \beta \ket{0}\ket{\sigma ^{\perp}},
    \label{eq:s4}
\end{align}
where, $\alpha = \sin(\dfrac{(2N_{\text{OAA}}+1)}{2}\theta)$, $\beta = \cos(\dfrac{(2N_{\text{OAA}}+1)}{2}\theta)$ and $\sin\theta/2 = \dfrac{1}{N_{\text{OAA}}}$.
The contribution of the \ac{OAA} subroutine to the overall circuit and analysis will be depicted in~\cref{fig:gate sets d=1and2} using \textcolor{OAA}{blue diamonds}.

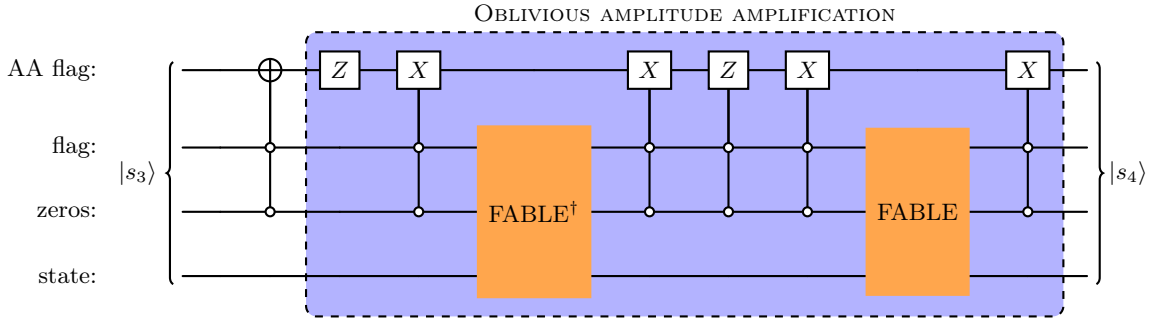
\begin{figure}
\begin{quantikz}
    \lstick{AA flag: } \setwiretype{n} & &\lstick[wires=4]{$\ket{s_3}$} & \setwiretype{q} & \targ{} & \gate[1]{\text{$Z$}}\gategroup[4,steps=8,style={dashed,
 rounded corners,fill={OAA!30}, inner xsep=2pt},
 background]{{\sc Oblivious amplitude amplification}} & \gate[1]{\text{$X$}} &   & \gate[1]{\text{$X$}} & \gate[1]{\text{$Z$}} & \gate[1]{\text{$X$}} &   & \gate[1]{\text{$X$}} &  \rstick[wires = 4]{$\ket{s_4}$} \\
    \lstick{flag: } \setwiretype{n} & & & \setwiretype{q} & \ctrl[open]{-1} &   & \ctrl[open]{-1} & \gate[3, style = {FABLE!70}]{\text{FABLE}^\dag} & \ctrl[open]{-1} & \ctrl[open]{-1} & \ctrl[open]{-1} & \gate[3, style = {FABLE!70}]{\text{FABLE}} & \ctrl[open]{-1} &   \\
    \lstick{zeros:}\setwiretype{n} & & & \setwiretype{q} &\ctrl[open]{-2}&   & \ctrl[open]{-2} & & \ctrl[open]{-2} & \ctrl[open]{-2} & \ctrl[open]{-2} & & \ctrl[open]{-2} &   \\
    \lstick{state:} \setwiretype{n} & & & \setwiretype{q} & &  &   & &   &   &   & &   &   
\end{quantikz}
    \caption{The quantum circuit implementing $\mathcal{Q}_{OAA}$ from step (4) in~\cref{fig: AA_diag}. It takes the state $\ket{s_3}$ from~\cref{eq:s3} as input and outputs the state $\ket{s_4}$ from~\cref{eq:s4}.}
    \label{fig: OAA_diag}
\end{figure}

(5) The penultimate stage of this algorithm is the \ac{IQFT}, which is the inverse of stage (2). The output of this stage is the state 
\begin{equation}
    \ket{s_5} = \mathcal{F}^{\dag}\ket{s_4} =  \alpha \ket{1}\ket{0}^{\otimes q+1}\ket{p_{n_t}} + \beta \ket{0}\mathcal{F}^{\dag}\ket{\sigma ^{\perp}}
    \label{eq:s5}
\end{equation}
where $|\tilde{p}_{n_t}\rangle = \frac{1}{||\tilde{p}(\mathbf{x}, n_t \Delta t)||_{2}} \sum_{\mathbf{x}}\tilde{p}(\mathbf{x},t = n_t \Delta t)|\mathbf{x},n_t \Delta t\rangle $.

(6) The final stage in our circuit is measurement. The measurement protocol in Ref.~\cite{Devereux2025QuantumEquation} follows Ref. \cite{VanApeldoorn2021QuantumEstimation}. However, this approach requires re-discretising the problem with a greater number of qubits, $\propto \log 1/\epsilon_q$, to reduce circuit depth. This qubit requirement exceeds the capabilities of current quantum devices or simulators. Consequently, we must rely on a more direct measurement protocol, using Hoeffding’s inequality to estimate the number of shots required.

\begin{lemma}[Hoeffding's inequality]
\label{lem: Hoeff}
    Suppose we are given a quantum algorithm that produces the state 
    \begin{equation}
        |\Tilde{\mathbf{p}}\rangle = \frac{1}{\sqrt{\sum_{(\mathbf{x},t=T)\in G}\Tilde{p}(\mathbf{x})^2}}\sum_{(\mathbf{x})\in G} \Tilde{p}(\mathbf{x})|\mathbf{x}\rangle,
    \end{equation}
    where $\mathbf{x} \in [0, n_x-1]^d$ for a given $t=T$. 
    To extract the probability distribution $\Tilde{p}(\mathbf{x})$ such that $||\Bar{\Tilde{p}}_N(\mathbf{x}) - \Tilde{p}(\mathbf{x})||_{\infty}\leq \epsilon_q$ with 
    $    \mathrm{Prob}(||\Bar{\Tilde{p}}_N(\mathbf{x}) - \Tilde{p}(\mathbf{x})||_{\infty}> \epsilon_q) \leq \delta
    $ 
   requires $N \geq \dfrac{-\log(\delta/2)}{2\epsilon_q^2}$ measurements  for each of the $n_x^d$ positions, 
    where $\Bar{\Tilde{p}}_N(\mathbf{x})$ is the sampled mean over $N$ samples and $1-\delta$ is the probability of success. 
\end{lemma}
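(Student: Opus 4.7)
The plan is to exhibit the estimator $\bar{\tilde{p}}_N(\mathbf{x})$ as an empirical average of i.i.d.\ bounded random variables obtained from repeated projective measurements of $|\tilde{\mathbf{p}}\rangle$, and then invoke the standard Hoeffding tail bound to invert the failure probability into a sample-size requirement on $N$.

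First I would record that measuring $|\tilde{\mathbf{p}}\rangle$ in the computational basis produces outcome $\mathbf{x}\in[0,n_x-1]^d$ with probability $\tilde{p}(\mathbf{x})^2/\sum_{\mathbf{y}}\tilde{p}(\mathbf{y})^2$, so that $N$ independent runs of the circuit generate $N$ i.i.d.\ draws from this distribution. Fixing a single grid point $\mathbf{x}$, I would introduce the indicator variables $X_i^{(\mathbf{x})}\in\{0,1\}$ that register whether shot $i$ produced outcome $\mathbf{x}$. These are i.i.d.\ Bernoulli variables supported in $[0,1]$, and their empirical mean, suitably rescaled so as to compare against $\tilde{p}(\mathbf{x})$ itself, is precisely the estimator $\bar{\tilde{p}}_N(\mathbf{x})$ appearing in the statement.

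Next I would apply Hoeffding's inequality for $N$ i.i.d.\ variables bounded in $[0,1]$, which yields
\begin{equation*}
    \mathrm{Prob}\bigl(|\bar{\tilde{p}}_N(\mathbf{x}) - \tilde{p}(\mathbf{x})|\geq \epsilon_q\bigr) \leq 2\exp(-2N\epsilon_q^2).
\end{equation*}
Setting the right-hand side equal to $\delta$ and solving for $N$ immediately produces the claimed bound $N \geq -\log(\delta/2)/(2\epsilon_q^2)$ per position.

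The main subtlety, rather than genuine obstacle, is the passage from the per-position tail bound to the uniform $\ell_{\infty}$ guarantee over all $n_x^d$ grid points. A standard union bound inflates the failure probability by at most a factor of $n_x^d$, contributing only an additive $\log n_x^d$ inside the sample-size formula, which is absorbed by the $\tilde{O}$ notation in the downstream complexity analysis; the lemma statement handles this by allocating $N$ shots \emph{for each of} the $n_x^d$ positions with $\delta$ interpreted as the per-site confidence. Once this bookkeeping is fixed, the proof reduces to the single application of Hoeffding described above.
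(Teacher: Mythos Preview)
Your proposal is correct and follows essentially the same route as the paper: specialise Hoeffding's inequality to i.i.d.\ $[0,1]$-bounded (Bernoulli indicator) variables, obtain the two-sided tail bound $2\exp(-2N\epsilon_q^2)$, and invert for $N$. The paper's proof is in fact more terse than yours---it simply restates Hoeffding, sets $a_i=0$, $b_i=1$, and rewrites the bound for the sample mean---without spelling out the quantum-measurement interpretation of the $X_i$ or the $\ell_\infty$/union-bound bookkeeping that you (rightly) flag.
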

The proof of this Lemma is provided 
in~\cref{app: Hoeff}. From here we will represent $\Bar{\Tilde{p}}_N(\mathbf{x}) $ as $\Tilde{\Tilde{p}}(\mathbf{x})$ as defined by~\cref{eqn: approx_definition,eqn: def_epsq}.

\label{sec: build}

\section{Results}
\label{sec: results}

\begin{table}
\centering
    \begin{tabular}{ c c c c c c c }
        \hline 
         $T$ [days] & $n_t$ & $a$ & $D$ & $\zeta$ [$\$^{-4}$] & $\epsilon$\\ 
         \hline \\
         10 & 100 & 0.2366 & 0.2455 & 1 & 0.03 \\  \\
         \hline
    \end{tabular}
    \caption{Fixed parameter values for DDE for a representative problem in finance \cite{Andersen2001TheVolatility, Wang2023ModelingProcess}. $T$ is the fixed time, which is discretised by $n_t$ steps. $a$ is the drift coefficient and represents the mean reversion parameter. $D$ is the diffusion coefficient and represents the volatility of the stock. $\zeta$ is a smoothness bound Ref.~\cite{Devereux2025QuantumEquation} and $\epsilon$ is defined in~\cref{eqn: approx_definition}.} 
    \label{tab: variables} 
\end{table}

For the construction and analysis of the quantum circuits, Quantinuum's tket development kit and compiler was used~\cite{Sivarajah2020Tket:Devices}. Circuit simulations were executed via the Qulacs quantum simulator backend~\cite{Suzuki2021Qulacs:Purpose}.
Tket facilitates circuit design, visualisation, and analysis, including the computation of circuit depth across various gate sets. This is enabled by tket’s compilation capabilities, which allow for consistent and optimised circuit transformations. Circuit composition can be influenced by multiple factors, so in our study, all reported metrics are derived from circuits compiled under uniform conditions. As such, the numerical results are considered accurate to within an order of magnitude. Further discussion of best circuit decomposition can be found in Ref.~\cite{Sivarajah2020Tket:Devices}. 
\cref{tab:scenarios} provides a description of each scenario used in our circuit analysis. We employed up to $22$ qubits and reached approximately $1\times 10^6$ operations. 
We deem this set of scenarios sufficient to demonstrate the trends and extract our main conclusions. As we discuss in~\cref{sec:discuss}, $1\times 10^6$ operations exceeds the capabilities of current quantum computers.

\begin{table}[]
\centering
\begin{tabular}{c c c c c c }
\hline
\textbf{Scenario} & Result & $n_x$ & $w$ & $L$ & $d$ \\
\hline
Simulation results comparison& \cref{fig: circuit_output_1d} & $32$ & $12$ & $20$ & $1$ \\
" & \cref{fig: circuit_output_1d_nx64} & $64$ & $14$ & $20$ & $1$ \\
" & \cref{fig: 2d_slices} & $16$ & $18$ & $40$ & $2$ \\

Error analysis & \cref{fig: n_x 1d} & $16-128$ & $-$ & $60$ & $1$\\

Gate set depth analysis: \ac{QFT} & \cref{Fig: gate sets qft} & $4-32$ & $2-5$ & $40$ & $1$\\

Gate set depth analysis: \ac{FABLE} & \cref{Fig: gate sets FABLE} & $4-32$ & $5-11$ & $40$ & $1$\\

Circuit depth analysis & \cref{Fig: gate sets_d=1} & $16-128$ & $10-16$ & $60$ & $1$\\

" & \cref{Fig: gate sets_d=2} & $8-32$ & $14-22$ & $60$ & $2$ \\

Resource analysis & \cref{tab:resource_estimation} & $16-32$ & $10-22$ & $60$ & $1-2$ \\

State preparation method comparison & \cref{app: SP} & $16-64$ & $4-12$ & $60$ & $1-3$ \\

\hline
\end{tabular}
\caption{Scenarios and parameters that have been tested within this paper. $n_x$, is the number of spatial discretisation steps, $w$ is the total circuit width, $L$ is the limit of the spatial dimensions and $d$ is the number of spatial dimensions. All other parameters are as listed in~\cref{tab: variables}.}
\label{tab:scenarios}
\end{table}

To validate our quantum circuit, we first analyse its output using a quantum simulator~\cite{Suzuki2021Qulacs:Purpose}. 
Our results are compared to both the analytical solution from \cref{eqn: analytic} and the classical solution method from \cref{thm: Classical_FFT} to confirm their accuracy. Initially, we begin our analysis by simulating one spatial dimension. The fixed parameters for this test are outlined in \cref{tab: variables}. These values were chosen due to their relevance to financial stock volatility~\cite{Andersen2001TheVolatility, Wang2023ModelingProcess}. In~\cref{fig: 1d_results} we demonstrate the circuit output in two scenarios; when $n_x = 32$ and $n_x = 64$. In both cases $L = 20$, all other parameters are as found in~\cref{tab: variables}. We also simulate the circuit for $d=2$ and $n_x = 16$ and share the results in \cref{app: d2}. We find in all cases that the quantum circuit matches the classical and analytic solutions to within $\epsilon$.

\begin{figure}
\floatsetup{valign=t, heightadjust=object}
\ffigbox{%
\begin{subfloatrow}
\ffigbox{\scalebox{0.85}{
\begin{tikzpicture}

\definecolor{crimson2143940}{RGB}{214,39,40}
\definecolor{darkgray176}{RGB}{176,176,176}
\definecolor{darkorange25512714}{RGB}{255,127,14}
\definecolor{forestgreen4416044}{RGB}{44,160,44}
\definecolor{lightgray204}{RGB}{204,204,204}
\definecolor{steelblue31119180}{RGB}{31,119,180}

\begin{axis}[
legend cell align={left},
legend style={fill opacity=0.8, draw opacity=1, text opacity=1, draw=lightgray204},
legend pos = north west,
tick align=outside,
tick pos=left,
x grid style={darkgray176},
xlabel={x},
xmajorgrids,
xmin=-22, xmax=22,
xtick style={color=black},
y grid style={darkgray176},
ylabel={p(x)},
ymajorgrids,
ymin=-0.0205474841201229, ymax=0.431497166522582,
ytick style={color=black}
]
\addplot [semithick, blue]
table {%
-20 3.27663350075914e-106
-18.7096774193548 3.03565130891006e-94
-17.4193548387097 5.32114342023137e-83
-16.1290322580645 1.76476591376352e-72
-14.8387096774194 1.10738417120691e-62
-13.5483870967742 1.31473607142464e-53
-12.258064516129 2.95330232928923e-45
-10.9677419354839 1.25518083564828e-37
-9.67741935483871 1.00933132956714e-30
-8.38709677419355 1.5356430215252e-24
-7.09677419354839 4.42054534150651e-19
-5.80645161290323 2.4076341283397e-14
-4.51612903225806 2.48104244764297e-10
-3.2258064516129 4.8373438672255e-07
-1.93548387096774 0.000178446810222584
-0.64516129032258 0.0124548838026755
0.64516129032258 0.164474900923117
1.93548387096774 0.410949682402459
3.2258064516129 0.194270441527807
4.51612903225806 0.01737616243112
5.80645161290322 0.000294056017088748
7.09677419354839 9.41532543572158e-07
8.38709677419355 5.70387058682848e-10
9.67741935483871 6.53782141576549e-14
10.9677419354839 1.41783438910406e-18
12.258064516129 5.81764237549835e-24
13.5483870967742 4.5164540672957e-30
14.8387096774194 6.6340270399171e-37
16.1290322580645 1.84368167584679e-44
17.4193548387097 9.69446179892705e-53
18.7096774193548 9.64474012912171e-62
20 1.81545889197598e-71
};
\addlegendentry{Initial condition}
\addplot [semithick, darkorange25512714]
table {%
-20 0
-18.7096774193548 0
-17.4193548387097 0
-16.1290322580645 0.00534560995293115
-14.8387096774194 0.00534560995293115
-13.5483870967742 0
-12.258064516129 0
-10.9677419354839 0
-9.67741935483871 0.00534560995293115
-8.38709677419355 0
-7.09677419354839 0.00925886803592263
-5.80645161290323 0.0130940167486245
-4.51612903225806 0.0410604091597212
-3.2258064516129 0.0781995319329955
-1.93548387096774 0.130502970380947
-0.64516129032258 0.164762764241296
0.64516129032258 0.15685517175727
1.93548387096774 0.109812933275029
3.2258064516129 0.0489933244900326
4.51612903225806 0.0185177360718453
5.80645161290322 0
7.09677419354839 0
8.38709677419355 0.00534560995293115
9.67741935483871 0
10.9677419354839 0.00755983409459183
12.258064516129 0
13.5483870967742 0
14.8387096774194 0
16.1290322580645 0
17.4193548387097 0
18.7096774193548 0
20 0
};
\addlegendentry{Quantum solution}
\addplot [semithick, forestgreen4416044]
table {%
-20 8.21180034170394e-10
-18.7096774193548 5.91670153145471e-09
-17.4193548387097 3.96417176675698e-08
-16.1290322580645 2.46068779953239e-07
-14.8387096774194 1.40921857238137e-06
-13.5483870967742 7.41052696238697e-06
-12.258064516129 3.55869504907565e-05
-10.9677419354839 0.000155076939422081
-9.67741935483871 0.000608686143068881
-8.38709677419355 0.00213311907659649
-7.09677419354839 0.00660459549557833
-5.80645161290323 0.0178379954911223
-4.51612903225806 0.0413703732732972
-3.2258064516129 0.0807855519732288
-1.93548387096774 0.129544833470647
-0.64516129032258 0.165196478018054
0.64516129032258 0.160801599776597
1.93548387096774 0.113777078863085
3.2258064516129 0.0560149032575911
4.51612903225806 0.0192287198003419
5.80645161290322 0.00481335544992947
7.09677419354839 0.00092231801373991
8.38709677419355 0.000140859394727347
9.67741935483871 1.76968765790844e-05
10.9677419354839 1.87459448899592e-06
12.258064516129 1.70701682733552e-07
13.5483870967742 1.35712942924123e-08
14.8387096774194 9.53972469281541e-10
16.1290322580645 6.00626909667706e-11
17.4193548387097 4.8542055819797e-12
18.7096774193548 1.30560441006932e-11
20 1.0633641716387e-10
};
\addlegendentry{Classical solution}
\addplot [semithick, crimson2143940]
table {%
-20 7.51827958765845e-16
-18.7096774193548 4.9732945982729e-14
-17.4193548387097 2.47610002491577e-12
-16.1290322580645 9.27877124735372e-11
-14.8387096774194 2.61704407232254e-09
-13.5483870967742 5.55558619022206e-08
-12.258064516129 8.87660772206031e-07
-10.9677419354839 1.06748664601366e-05
-9.67741935483871 9.66220302312856e-05
-8.38709677419355 0.000658245882889924
-7.09677419354839 0.0033751915805474
-5.80645161290323 0.0130258775569737
-4.51612903225806 0.0378367511651933
-3.2258064516129 0.0827216085909816
-1.93548387096774 0.136120140324867
-0.64516129032258 0.168587002955284
0.64516129032258 0.157153506939178
1.93548387096774 0.110261107050263
3.2258064516129 0.0582262613185032
4.51612903225806 0.023142685131429
5.80645161290322 0.00692319992070054
7.09677419354839 0.00155882830856319
8.38709677419355 0.000264172727448282
9.67741935483871 3.3695814680549e-05
10.9677419354839 3.2349101810696e-06
12.258064516129 2.33747393835364e-07
13.5483870967742 1.27124603953274e-08
14.8387096774194 5.20368181536729e-10
16.1290322580645 1.6032089865159e-11
17.4193548387097 3.71764424915846e-13
18.7096774193548 6.48849245066468e-15
20 8.52350192417156e-17
};
\addlegendentry{Analytical solution}
\end{axis}

\end{tikzpicture}
 \label{fig: circuit_output_1d}}}
\ffigbox{\scalebox{0.85}{
 \input{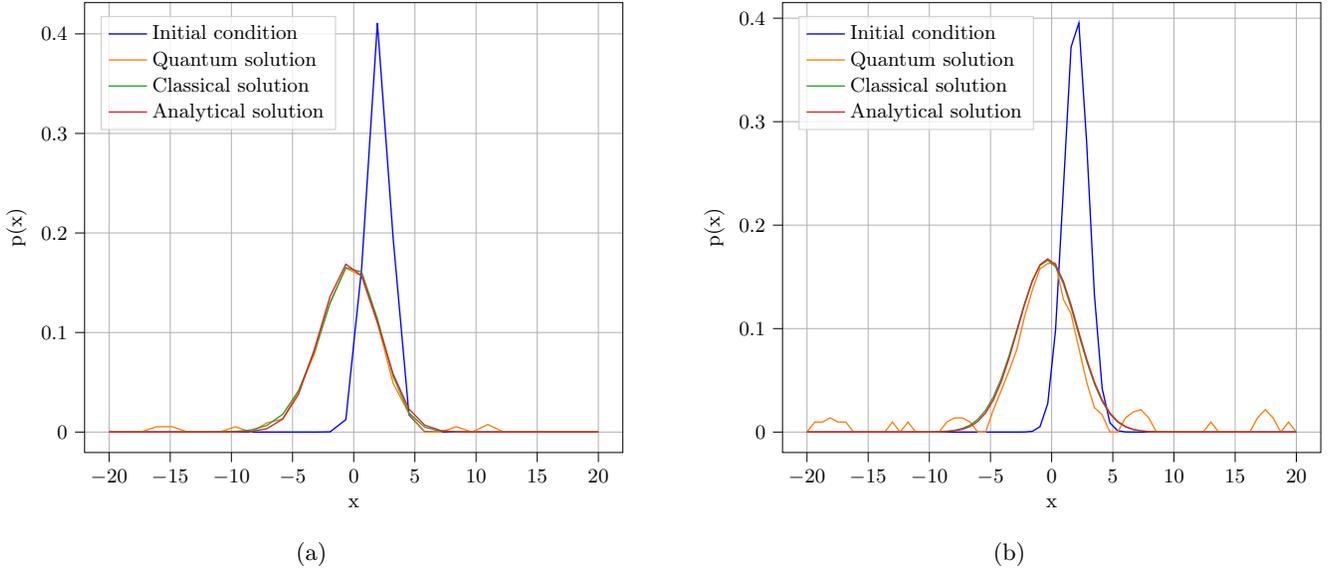}}}{\caption{\label{fig: circuit_output_1d_nx64}}}
\end{subfloatrow}}
{\caption{The $d=1$ simulation comparison of the classical and quantum diagonalisation by Fourier transform methods from \cref{thm: QFT} and \cref{thm: Classical_FFT} for the scenario described by~\cref{tab: variables} when $L=20$. The circuits were built using tket and simulated using Qulacs~\cite{Sivarajah2020Tket:Devices,Suzuki2021Qulacs:Purpose}. (a) The output of the circuit for $n_x = 32$, which uses 12 qubits, and (b) for $n_x = 64$, which uses 14 qubits.}\label{fig: 1d_results}}
\end{figure}

To further inspect the accuracy of these results, we plot how the error contributions scale with $n_x$ in \cref{fig: n_x 1d}. 
We determine $\epsilon_c$ by comparison between the analytical and classical solution, since $\epsilon_c$ is a purely classical error from discretisation. Then, we can calculate $\epsilon_q$ by comparison between the classical and quantum solutions, as this is the error in approximating the distribution using quantum measurement. Finally, $\epsilon = \epsilon_c + \epsilon_q$. As stated in \cref{tab: variables}, our target for $\epsilon$ is $ 0.03$. We can see that for $N=50,000$, we only meet the target $\epsilon$ for $n_x = 32$ and $n_x = 64$. These results show that increasing $n_x$ reduces the discretisation error $\epsilon_c$, as is expected. However, increasing $n_x$ increases the number of qubits and the number of gates. Consequently, $\epsilon_q$ increases for a fixed number of shots $N$. Reducing the total error will always be a balance between these two competing factors. This was a key result in Ref.~\cite{Devereux2025QuantumEquation} that~\cref{fig: n_x 1d} has demonstrated in practice. 

\begin{figure}
\floatsetup{valign=c, heightadjust=object}
\begin{floatrow}
\ffigbox{\scalebox{0.8}{\begin{tikzpicture}
\begin{axis}[
    xlabel={$n_x$},
    ylabel={$\epsilon$ [$10^{-2}$]},
    xmin=8, xmax=256,
    ymin=0, ymax=7,
    xtick={8, 16, 32, 64, 128, 256},
    xticklabels={8, 16, 32, 64, 128, 256},
    ytick={0, 1, 2, 3, 4, 5, 6, 7},
    legend pos= north east,
    ymajorgrids=true,
    xmajorgrids=true,
    xmode=log,
    log basis x={2}
]

\addplot[
    color=blue,
    mark=x,
    ]
    coordinates {
    (16, 6.14)(32, 2.9)(64, 2.9)(128, 3.957)
    };
\addplot[
    color=red,
    mark=x,
    ]
    coordinates {
    (16, 6)(32, 2.7)(64, 1.7)(128, 0.26)
    };
\addplot[
    color=orange,
    mark=x,
    ]
    coordinates {
    (16, 0.14)(32, 0.2)(64, 1.2)(128, 3.7)
    };
\addplot[
    color=black,
    no markers,
    dashed
    ]
    coordinates {
    (8, 3)(256, 3)
    };
    \legend{$\epsilon$, $\epsilon_c$, $\epsilon_q$}
\end{axis}
\begin{axis}[
        xmin=1, xmax=6,
        xtick={1, 2, 3, 4, 5, 6},
        xticklabels={8, 10, 12, 14, 16, 18},
        ymin=0, ymax=70,
        hide y axis,
        axis x line*=top,
        xlabel={Qubits}
    ]
  \end{axis}
\end{tikzpicture}}}{\caption{An analysis of the error contributions that shows the variation of $\epsilon$ with respect to the number of spatial discretisation points, $n_x$, for $N= 50,000$.}\label{fig: n_x 1d}}
\capbtabbox{\scalebox{0.85}{
    \begin{tabular}{ l c }
        \hline
        \vspace{2pt}
         Label & Gates \\ 
         \hline
         \vspace{2pt}
         IBM Heron~\cite{ProcessorDocumentation} & $\sqrt{X}$, $R_z(\theta)$, $X$, $CZ$ \\ 
         Fujitsu \acs{STAR}~\cite{Toshio2025PracticalComputer}& $CNOT$, $H$, $S$ $R_z(\theta)$ \\
         Quantinuum Tket~\cite{Sivarajah2020Tket:Devices} & $Tk1(\alpha, \beta, \gamma) = R_z(\alpha)R_y(\beta)R_z(\gamma)$, $CCX$ \\
         QFT~\cite{Nielson2010QuantumInformation} & $CX$, $R_z(\theta)$,  $SWAP$\\
         IonQ~\cite{NativeDocumentation} & $GPi(\theta)= R_z(-\theta)XR_z(\theta)$, $R_z(\pi/2)\, \textrm{or}\, R_y(\pi/2)$, $Z$\\
         \hline
    \end{tabular}}}{\caption{List of gate sets for a variety of hardware.} \label{tab: gate sets}}
    \end{floatrow}
\end{figure}

Throughout this analysis, we use $N = 50,000$. This was chosen based on observing a plateau in $\epsilon_q$ when varying $N$. This can be seen in~\cref{app: Hoeff}. From \cref{fig: n_x 1d}, we know that for $n_x = 32$  $\epsilon_c = 0.027$ and we desire an $\epsilon = 0.03$. Therefore, we require $\epsilon_q \leq 0.003$. In Lemma~\ref{lem: Hoeff}, we introduced an upper bound for the number of shots required to find $\Tilde{\Tilde{p}}(\mathbf{x})$ to within $\epsilon_q$ of $\Tilde{p}(\mathbf{x})$. This bound ensures a success probability of $1-\delta$. Using Lemma~\ref{lem: Hoeff}, the plateau observed is equivalent to a success probability of $1-\delta = 0.81$.

\section{Depth scaling}
\label{Sec: res_est}

Having defined the scenarios and verified the accuracy of the corresponding quantum circuits, we now analyse how the circuit depth scales and compare it against the theoretical complexity scaling. We will test how each circuit stage varies with dimension, $d$, and the number of spatial discretisation steps, $n_x$.   
To ensure a meaningful analysis, we first establish a consistent, universal gate set.
Various quantum computing gate sets exist, broadly categorised as theoretically universal or native to current devices. We begin by performing a gate set analysis to select the most suitable one for our subsequent analysis.

Most algorithmic analysis considers a very broad set of gates, e.g., all known 1 and 2-qubit gates, which we will refer to as an unconstrained gate set in the remainder of the text~\cite{Linden2022QuantumEquation, Plesch2011Quantum-stateDecompositions, Arrazola2019QuantumEquations}. Alternatively, a limited and theoretical universal gate set can be chosen, such as Toffoli ($CCX$) and Hadamard ($H$)~\cite{Nielson2010QuantumInformation}. Neither of these is particularly useful when considering implementing algorithms on real hardware. Therefore, we have completed a representative analysis of how the \ac{QFT} and \ac{FABLE} circuits scale with qubit number for a variety of hardware specific native gate sets. The first two are superconducting gate sets. The first from Fujitsu is the \ac{STAR} gate set~\cite{Toshio2025PracticalComputer}, the second is the IBM Heron gate set \cite{ProcessorDocumentation}. We also included the IonQ trapped ion gate set~\cite{NativeDocumentation}. Finally, we also included the Quantinuum TK1 gate set, which is not specific to a hardware type and uses just two gates, including the three-qubit Toffoli gate~\cite{Sivarajah2020Tket:Devices}. The gate sets are summarised in \cref{tab: gate sets}.

\cref{fig:gate sets} illustrates the circuit depth results for two subroutines, categorised by the gate sets listed in \cref{tab: gate sets}. We begin with the \ac{QFT} subroutine, given its well-documented and easily understood structure. Practically, the theoretical and unconstrained circuit depths for \ac{QFT} are very similar. This similarity is straightforward to demonstrate: a \ac{QFT} circuit for $k$ qubits can be constructed with $k$ Hadamard gates, $k/2$ swap gates and $k(k-1)/2$ controlled unitary rotations, yielding a gate count of $O(k^2)$~\cite{Nielson2010QuantumInformation}. Our unconstrained implementation reflects this. Although the gate count is $O(k^2)$, the theoretical and unconstrained circuit depths effectively match because many of these gates can be executed in parallel, reducing the depth. The unconstrained gate set yields the lowest circuit depth, essentially by performing multiple gates simultaneously. 

However, this ideal scenario does not accurately represent how quantum computers operate in practice. The \ac{STAR} gate set represents the most efficient practical implementation. Though there are challenges with this gate set when it comes to efficient analogue rotations~\cite{Toshio2025PracticalComputer}. In contrast, the IonQ gate set analysis resulted in a significantly higher gate count. To maintain clarity, we present a truncated set of these results.
In \cref{Fig: gate sets FABLE}, a similar analysis was performed for the \ac{FABLE} subroutine, which, as shown in \cref{fig:gate sets d=1and2}, contributes most to the overall algorithm's circuit depth. 
\cref{Fig: gate sets FABLE} reveals a steep rise in circuit depth for the \ac{FABLE} subroutine as the number of qubits grows, particularly under the IBM gate set. This is consistent with the theoretical expectation of \ac{FABLE} scaling as $O(n_x^{2d}$).
Here again, the tket and \ac{STAR} gate sets represent the most efficient implementations. Interestingly, the IBM gate set proves least efficient in this instance. This is likely because the \ac{FABLE} subroutine primarily consists of rotation gates, a type of operation for which the IonQ gate set is better suited than the IBM gate set.

The choice of gate set for a practical implementation is complicated. A perfectly implementable gate set, if it existed, would be universally adopted. Implementing gates efficiently is an active area of research across hardware types. For example, each IBM machine has a different native gate set~\cite{ProcessorDocumentation}. The Fujitsu gate set does not yet have a way to suppress errors escalating for high rotation circuits~\cite{Toshio2025PracticalComputer}.
Trapped-ion gate counts are incredibly high in both the \ac{QFT} and \ac{FABLE} scenarios in \cref{fig:gate sets}. The alternative would be to choose a theoretical universal gate set. This is simpler in theory, but of limited practical value. 
The tket gate set serves as a useful intermediate solution. While not a standard universal gate set, it is universal. 
All of the backends in tket are decomposed via this gate set. 
Therefore, it provides an easily testable measure for every circuit.

Our analysis uses the tket and \ac{STAR} gate sets. This is driven by their ability to achieve the lowest circuit depths in~\cref{fig:gate sets d=1and2} and allows us to present both theoretical and practical results.

\begin{figure}
\floatsetup{valign=t, heightadjust=object}
\ffigbox{%
\begin{subfloatrow}
\ffigbox{\scalebox{0.8}{\begin{tikzpicture}
\begin{axis}[
    xlabel={Qubits},
    ylabel={Depth},
    xmin=1, xmax=6,
    ymin=0, ymax=70,
    xtick={0, 1, 2, 3, 4, 5, 6},
    ytick={0, 10, 20, 30, 40, 50, 60, 70},
    legend pos= north east,
    ymajorgrids=true,
    xmajorgrids=true,
    title style={yshift=5ex},
    axis x line*=top,
]

\addplot[
    color=Cerulean,
    mark=x,
    ]
    coordinates {
    (2,4)(3,6)(4,8)(5,10)
    };
\addplot[
    color=red,
    mark=x,
    ]
    coordinates {
    (2,4)(3,8)(4,12)(5,18)
    };
\addplot[
    color=ForestGreen,
    mark=x,
    ]
    coordinates {
    (2,5)(3,12)(4,21)(5,28)
    };
\addplot[
    color=magenta,
    mark=x,
    ]
    coordinates {
    (2,8)(3,16)(4,24)(5,32)
    };
\addplot[
    color=YellowOrange,
    mark=x,
    ]
    coordinates {
    (2,12)(3,20)(4,28)(5,36)
    };
\addplot[
    color=blue,
    mark=x,
    ]
    coordinates {
    (2,33)(3,67)(4,101)(5,135)
    };
    
    \legend{Unconst., Theoretical, Tket, STAR, IBM, IonQ}
    
\end{axis}
\begin{axis}[
        xmin=1, xmax=6,
        xtick={1, 2, 3, 4, 5, 6},
        xticklabels={2, 4, 8, 16, 32, 64},
        ymin=0, ymax=70,
        hide y axis,
        axis x line*=bottom,
        xlabel={$n_x$}
    ]
  \end{axis}
\end{tikzpicture}}}{\caption{}\label{Fig: gate sets qft}}
\ffigbox{\scalebox{0.8}{
\begin{tikzpicture}
\begin{axis}[
    xlabel={Qubits},
    ylabel={Depth},
    xmin=1, xmax=6,
    ymin=0, ymax=1200,
    xtick={ 1, 2, 3, 4, 5, 6},
    xticklabels={3, 5, 7, 9, 11, 13},
    ytick={0, 200, 400, 600, 800, 1000, 1200},
    legend pos= north west,
    ymajorgrids=true,
    xmajorgrids=true,
    title style={yshift=5ex},
    axis x line*=top,
]

\addplot[
    color=red,
    mark=x,
    ]
    coordinates {
    (2,16)(3,64)(4,256)(5,1024)
    };
\addplot[
    color=ForestGreen,
    mark=x,
    ]
    coordinates {
    (2,16)(3,50)(4,98)(5,194)
    };
\addplot[
    color=magenta,
    mark=x,
    ]
    coordinates {
    (2,32)(3,84)(4,164)(5,324)
    };
\addplot[
    color=YellowOrange,
    mark=x,
    ]
    coordinates {
    (2,84)(3,273)(4,537)(5,1065)
    };
\addplot[
    color=blue,
    mark=x,
    ]
    coordinates {
    (2,57)(3,180)(4,340)(5,660)
    };
    
    \legend{Theoretical, Tket, STAR, IBM, IonQ}
    
\end{axis}
\begin{axis}[
        xmin=1, xmax=6,
        xtick={1, 2, 3, 4, 5, 6},
        xticklabels={2, 4, 8, 16, 32, 64},
        ymin=0, ymax=70,
        hide y axis,
        axis x line*=bottom,
        xlabel={$n_x$}
    ]
\end{axis}
\end{tikzpicture}}}{\caption{} \label{Fig: gate sets FABLE}}
    \end{subfloatrow}}
    {\caption{An analysis of the circuit depth scaling in five gate sets of (a) the \ac{QFT} subroutine and (b) the \ac{FABLE} subroutine, dependent on the number of qubits. This analysis was completed for the gate sets described in \cref{tab: gate sets} and the theoretical scaling of the subroutines. This set of scenarios, whilst small, is enough to demonstrate the trend.}
    \label{fig:gate sets}}
\end{figure}
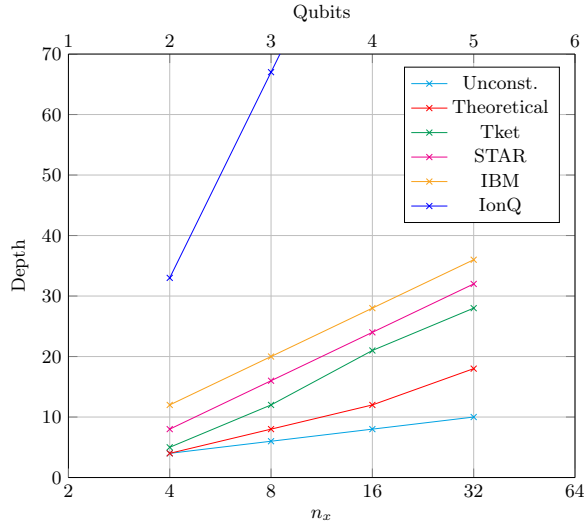
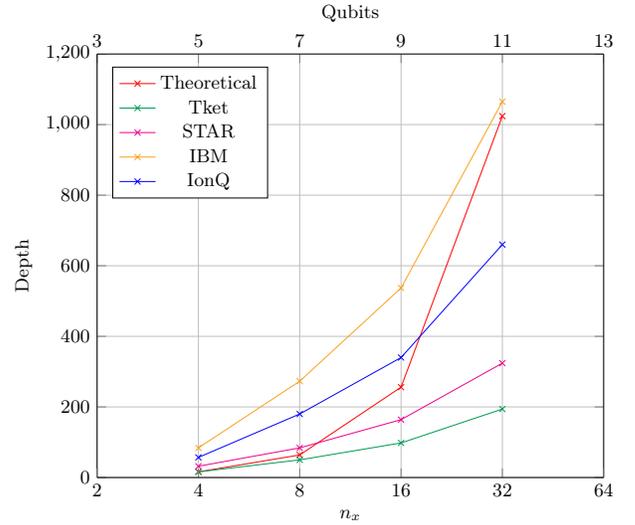

Within a chosen gate set, we can consider how the circuit depth scales with a variety of variables. We begin by considering how each of the subroutines scales with $d$ and $n_x$. 
To do this, we employ the chosen basis gate sets, considering a variety of $n_x$ values for $d=1$, as can be seen in \cref{Fig: gate sets_d=1}. 
Based on \cref{thm: QFT}, we expect the total scaling to go as $O(d^2 n_x^{d/2})$. \cref{Fig: gate sets_d=1} shows that our total scaling matches well with the theoretical scaling. 
The highest contributing subroutine to the overall gate count is the \ac{FABLE} subroutine (the \ac{OAA} subroutine depends on the \ac{FABLE} subroutine). 
Considering how the gate set affects the scaling, we find that the tket gate set does have a lower overall gate count than the \ac{STAR} gate set. This is due to the use of a 3-qubit gate in the former. The state preparation also scales significantly better in the tket gate set than the \ac{STAR} gate set. The \ac{QFT} scales very similarly between the gate sets. The \ac{FABLE} subroutine scaling is similar across the two gate sets; the power of the 3-qubit gate in the tket gate set is offset by the analogue rotation gate in the \ac{STAR} gate set. However, the \ac{STAR} gate set has a higher overhead for the \ac{FABLE} subroutine. All of this contributes to the overall scaling being better in the tket gate set. 

In~\cref{Fig: gate sets_d=2} we also consider the effects of dimension on this scaling by studying the same plot for $d=2$. When $d=1$, the theoretical scaling closely matches the total scaling. However, for $d=2$ the total circuit depth is higher than the theoretical scaling. Furthermore, increasing $n_x$ gives a steeper scaling than theorised. This discrepancy between theorised and actual is due to \ac{FABLE} scaling as $4^q = n_x ^{2d} = n_x ^4$. Whereas, in Ref.~\cite{Devereux2025QuantumEquation, Linden2022QuantumEquation}, this step was theorised to have scaling as $\Tilde{O}(n_x)$.
The \ac{QFT} circuit depth is unchanged with $d$, as increased dimension requires simultaneous application of a \ac{QFT} per dimension. The largest contributor to increased circuit depth is the \ac{FABLE} circuit depth. The total circuit depth scaling with $d=2$ is an order of magnitude larger than for $d=1$. 

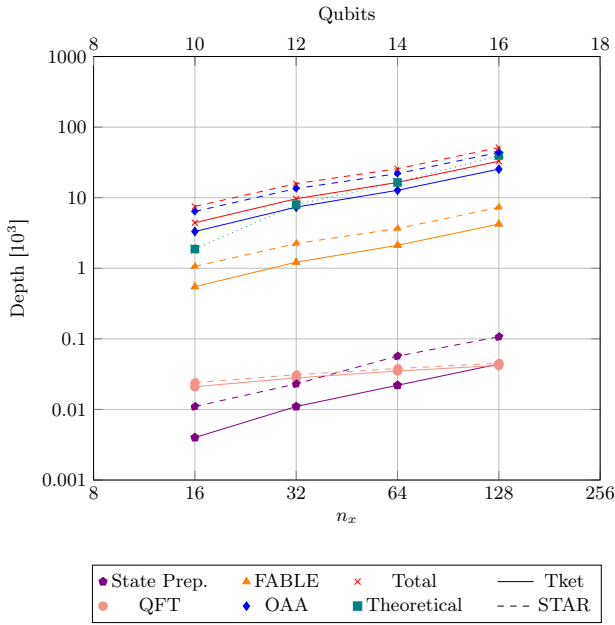
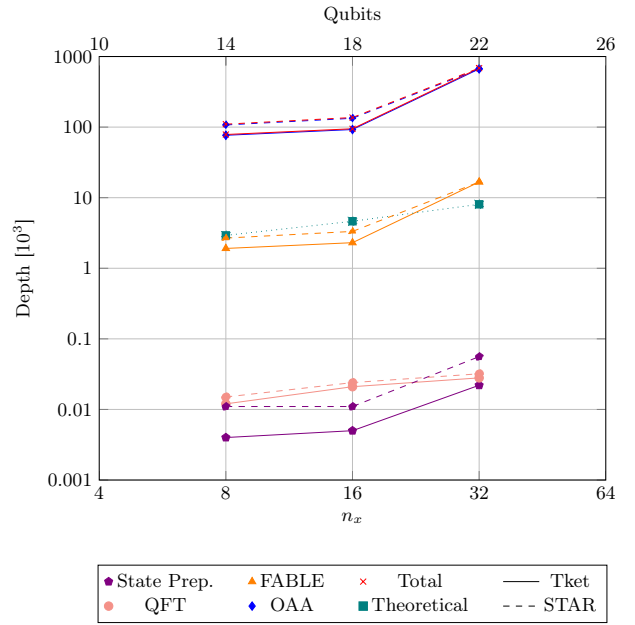
\begin{figure}
\floatsetup{valign=t, heightadjust=object}
\ffigbox{%
\begin{subfloatrow}
\ffigbox{\scalebox{0.8}{\begin{tikzpicture}

\begin{axis}[
    title style={yshift=5ex},
    xlabel={$n_x$},
    ylabel={Depth [$10^3$]},
    xmin=8, xmax=256,
    ymin=0.001, ymax=1000,
    xtick={8, 16, 32, 64, 128, 256},
    xticklabels={8, 16, 32, 64, 128, 256},
    ytick={0.001, 0.01, 0.1, 1, 10, 100, 1000},
    yticklabels={0.001, 0.01, 0.1, 1, 10, 100, 1000},
    transpose legend,
    legend style={at={(0.5,-0.2)},anchor=north},
    legend style={/tikz/every even column/.append style={column sep=0.5cm}},
    legend columns=2,
    ymajorgrids=true,
    xmajorgrids=true,
    xmode=log,
    log basis x={2},
    ymode = log,
    log basis y={10}
]
\addlegendimage{only marks, mark=pentagon*, color=SP}
\addlegendentry{State Prep.}
\addlegendimage{only marks, mark=*, color= QFT}
\addlegendentry{QFT}
\addlegendimage{only marks, mark=triangle*, color=FABLE}
\addlegendentry{FABLE}
\addlegendimage{only marks, mark=diamond*, color = OAA}
\addlegendentry{OAA}
\addlegendimage{only marks, mark=x, color=red}
\addlegendentry{Total}
\addlegendimage{only marks, mark=square*, color= teal}
\addlegendentry{Theoretical}
\addlegendimage{no markers, color= black}
\addlegendentry{Tket}
\addlegendimage{no markers, dashed, color= black}
\addlegendentry{STAR}
\addplot[
    color=SP,
    mark=pentagon*,
    ]
    coordinates {
    (16,0.004)(32,0.011)(64,0.022)(128,0.044)
    };
\addplot[
    color=QFT,
    mark=*,
    ]
    coordinates {
    (16,0.021)(32,0.028)(64,0.035)(128,0.042)
    };

\addplot[
    color=FABLE,
    mark=triangle*,
    ]
    coordinates {
    (16,0.548)(32,1.220)(64,2.116)(128,4.228)
    };

\addplot[
    color=OAA,
    mark=diamond*,
    ]
    coordinates {
    (16,3.319)(32,7.351)(64,12.727)(128,25.399)
    };

\addplot[
    color=red,
    mark=x,
    ]
    coordinates {
    (16,4.425)(32,9.662)(64,16.471)(128,32.827)
    };
\addplot[
    color=teal,
    mark=square*,
    dotted,
    ]
    coordinates {
    (16,1.872)(32,7.979)(64,16.463)(128,39.646)
    };

\addplot[
    color=SP,
    mark=pentagon*,
    dashed,
    ]
    coordinates {
    (16, 0.011)(32,0.023)(64,0.057)(128,0.107)
    };
\addplot[
    color=QFT,
    mark=*,
    dashed,
    ]
    coordinates {
    (16,0.024)(32,0.031)(64,0.038)(128,0.045)
    };
\addplot[
    color=FABLE,
    mark=triangle*,
    dashed,
    ]
    coordinates {
    (16,1.060)(32,2.244)(64,3.652)(128,7.3)
    };
\addplot[
    color=OAA,
    mark=diamond*,
    dashed,
    ]
    coordinates {
    (16,6.391)(32,13.495)(64,21.943)(128,43.831)
    };
\addplot[
    color=red,
    mark=x,
    dashed,
    ]
    coordinates {
    (16,7.51)(32,15.824)(64,25.728)(128,51.328)
    };
\end{axis}
\begin{axis}[
        xmin=1, xmax=6,
        xtick={1, 2, 3, 4, 5, 6},
        xticklabels={8, 10, 12, 14, 16, 18},
        ymin=0, ymax=70,
        hide y axis,
        axis x line*=top,
        xlabel={Qubits}
    ]
\end{axis}

\end{tikzpicture}}}{\caption{}\label{Fig: gate sets_d=1}}
\ffigbox{\scalebox{0.8}{
\begin{tikzpicture}

\begin{axis}[
    title style={yshift=5ex},
    xlabel={$n_x$},
    ylabel={Depth [$10^3$]},
    xmin=4, xmax=64,
    ymin=0.001, ymax=1000,
    xtick={4, 8, 16, 32, 64},
    xticklabels={4, 8, 16, 32, 64},
    ytick={0.001, 0.01, 0.1, 1, 10, 100, 1000},
    yticklabels={0.001, 0.01, 0.1, 1, 10, 100, 1000},
    transpose legend,
    legend style={at={(0.5,-0.2)},anchor=north},
    legend style={/tikz/every even column/.append style={column sep=0.5cm}},
    legend columns=2,
    ymajorgrids=true,
    xmajorgrids=true,
    xmode=log,
    log basis x={2},
    ymode = log,
    log basis y={10}
]
\addlegendimage{only marks, mark=pentagon*, color=SP}
\addlegendentry{State Prep.}
\addlegendimage{only marks, mark=*, color= QFT}
\addlegendentry{QFT}
\addlegendimage{only marks, mark=triangle*, color=FABLE}
\addlegendentry{FABLE}
\addlegendimage{only marks, mark=diamond*, color = OAA}
\addlegendentry{OAA}
\addlegendimage{only marks, mark=x, color=red}
\addlegendentry{Total}
\addlegendimage{only marks, mark=square*, color= teal}
\addlegendentry{Theoretical}
\addlegendimage{no markers, color= black}
\addlegendentry{Tket}
\addlegendimage{no markers, dashed, color= black}
\addlegendentry{STAR}
\addplot[
    color=SP,
    mark=pentagon*,
    ]
    coordinates {
    (8,0.004)(16,0.005)(32,0.022)
    };
\addplot[
    color=QFT,
    mark=*,
    ]
    coordinates {
    (8,0.012)(16,0.021)(32,0.028)
    };

\addplot[
    color=FABLE,
    mark=triangle*,
    ]
    coordinates {
    (8,1.91)(16,2.304)(32,16.686)
    };

\addplot[
    color=OAA,
    mark=diamond*,
    ]
    coordinates {
    (8,76.6)(16,92.361)(32,667.641)
    };

\addplot[
    color=red,
    mark=x,
    ]
    coordinates {
    (8,78.539)(16,94.712)(32,684.405)
    };
\addplot[
    color=teal,
    mark=square*,
    dotted,
    ]
    coordinates {
    (8,2.924)(16,4.636)(32,8.061)
    };

\addplot[
    color=SP,
    mark=pentagon*,
    dashed,
    ]
    coordinates {
    (8,0.011)(16, 0.011)(32,0.056)
    };
\addplot[
    color=QFT,
    mark=*,
    dashed,
    ]
    coordinates {
    (8,0.015)(16,0.024)(32,0.032)
    };
\addplot[
    color=FABLE,
    mark=triangle*,
    dashed,
    ]
    coordinates {
    (8,2.678)(16,3.328)(32,16.690)
    };
\addplot[
    color=OAA,
    mark=diamond*,
    dashed,
    ]
    coordinates {
    (8,107.321)(16,133.321)(32,667.801)
    };
\addplot[
    color=red,
    mark=x,
    dashed,
    ]
    coordinates {
    (8,110.040)(16,136.708)(32,684.611)
    };
\end{axis}
\begin{axis}[
        xmin=1, xmax=5,
        xtick={1, 2, 3, 4, 5},
        xticklabels={10, 14, 18, 22, 26},
        ymin=0, ymax=70,
        hide y axis,
        axis x line*=top,
        xlabel={Qubits}
    ]
\end{axis}
\end{tikzpicture}}}{\caption{} \label{Fig: gate sets_d=2}}
    \end{subfloatrow}}
    {\caption{An analysis of the circuit depth scaling for circuit components in the (a) $d=1$ and (b) $d=2$ case for a variety of $n_x$ spatial discretisation steps. Using the scenario described by~\cref{tab: variables}. The results compare two gate sets, solid line for tket and dashed line for \ac{STAR}. The theoretical result is from \cref{thm: QFT}.}
    \label{fig:gate sets d=1and2}}
\end{figure}

\section{Discussion and conclusion}
\label{sec:discuss}
Now that we understand how our algorithm scales for a perfect quantum computer, the final consideration of a resource analysis is
its implementation on an imperfect, noisy quantum computer. While a fault-tolerant resource estimation is beyond the scope of this work, we can comment on resource requirements based on a high-level understanding of quantum error correction. Error correction requires several physical qubits to be encoded into a single logical qubit. The minimum number of physical qubits required to correct all single-qubit errors is $5$, so we use this to estimate a lower bound on the number of physical qubits~\cite{Knill2001BenchmarkingCode}. 
Our minimum scenario of $d=1,\ n_x= 16$ requires $10$ logical qubits, and a meaningful scenario of $d=2,\ n_x = 32$ requires $22$ logical qubits. Therefore, we would require a lower bound of $50$ to $110$ physical qubits. This is comparable to current quantum computers, based on recent works that have documented current qubit capabilities~\cite{Le2025AwesomeComputation, Toshio2025PracticalComputer, Zobrist2025QuantumThreshold}. The limiting factor is the lifetime of these qubits when compared to the number of operations required by our scenarios. To demonstrate this, we consider a superconducting architecture since the \ac{STAR} gate set was designed for a superconducting quantum computer. 
Superconducting devices with at least $150$ qubits are currently capable of around $300$ coherant operations~\cite{Montanez-Barrera2025EvaluatingDepth}.
We note that Ref.~\cite{Tremba2025IsRuntimes} has shown that circuit depth alone is not a good measure of runtime, as it does not take into account the different gate implementation times. For this reason we highlight that this demonstration uses a maximum gate implementation time as a lower bound on current capabilities.
Our two scenarios have circuit depths of $7.51\times 10^3$ and $6.84\times10^5$ respectively from \cref{Fig: gate sets_d=1,Fig: gate sets_d=2}. Both of these exceed the limit of $300$ operations substantially. These results are summarised in~\cref{tab:resource_estimation}.
It is evident that our algorithm cannot be implemented within current qubit lifetimes.

Although quantum error correction can extend qubit lifetimes by a factor of 2-3~\cite{Zobrist2025QuantumThreshold, Dasu2025Order-of-magnitudeCode}, this remains insufficient with current qubits.
Moreover, our estimate represents a lower bound that does not account for the additional operations required for error correction. Incorporating these factors would necessitate consideration of several interdependent variables, such as surface code distance, the physical error rate of the qubits, and the chosen gate set. A comprehensive resource estimation for fault-tolerant implementation would therefore be a valuable direction for future research.

\begin{table}[]
\centering
\begin{tabular}{l c c c c}
\hline
Scenario & Logical Qubits & Physical Qubits & Circuit Depth & \% of Max. \\
\hline
$d = 1$, $n_x = 16$ & $10$ & $\geq 50$ & 7,510 & $2503\%$ \\
$d = 2$, $n_x = 32$ & $22$ & $\geq 110$ & 684,000 & $228,000\%$ \\
\hline
\end{tabular}
\caption{Resource analysis for two representative scenarios. Circuit depth is given as a percentage of the maximum number of operations, $300$~\cite{Montanez-Barrera2025EvaluatingDepth}.} 
\label{tab:resource_estimation}
\end{table}

In conclusion, this paper has presented a practical implementation and resource analysis of a quantum algorithm for solving the \ac{DDE} using diagonalisation via the \ac{QFT} method. We construct and validate the quantum circuit and compare its output against analytical and classical solutions. Our results confirm the algorithm’s correctness and highlight the trade-offs between discretisation accuracy and quantum measurement error.

We then analysed the circuit depth across five gate sets, including both theoretical and hardware-native configurations. We found that scaling with the number of spatial discretisation steps aligns with theoretical predictions. However, scaling with spatial dimension exceeds the theoretical scaling due to the overhead introduced by the \ac{FABLE} subroutine. Among the gate sets, tket and STAR offer the most efficient implementations, with tket showing superior scaling due to its use of multi-qubit gates.

Finally, we lower-bounded the resources required for fault-tolerant implementation of our scenarios. Although the number of logical qubits needed is within reach of current quantum devices, the required circuit depth far exceeds qubit lifetimes, even before accounting for error correction overhead. This underscores the gap between theoretical quantum advantage and the limitations of current hardware, and motivates future work on scalable, error-corrected implementations of quantum algorithms for \acp{PDE}. The existence of error-corrected circuit generation tools is acknowledged; however, compatibility across circuit-building packages is presently insufficient. Subsequent work will involve extending this analysis to an error-corrected circuit for fault-tolerant implementation.

\section*{Acknowledgements}
This work was supported by Fujitsu UK Ltd.

\bibliography{references}
\newpage
\appendix
\section{State preparation subroutine evaluation}
\label{app: SP}
QCLib is a library that has collected and compared several state preparation subroutines \cite{Araujo2023QuantumLibrary}. We repeated their analysis for our Gaussian initial condition $\ket{p_0}$ and found that the low-rank subroutine of state preparation reduced the gate count significantly. 
The results are shown in \cref{fig:SP_comp}.  The low-rank subroutine scales more efficiently with $q$ than the naive state preparation subroutine, where $q = d\log_2 n_x$. One of the key improvements is that the low-rank subroutine does not have a $d$-dependence. This is due to the low-rank subroutine taking advantage of the sparseness of $\mathbf{p}_0$ via Schmidt decomposition. We can see this by the dip in the low-rank depth at 8 qubits, 8 qubits corresponds to $d=2$ and $n_x = 16$. The reduction achieved by the low-rank subroutine is catalogued in ~\cref{tab:state_prep}.

\begin{figure}
\floatsetup{valign=c, heightadjust=object}
\begin{floatrow}
\ffigbox{\scalebox{0.8}{\begin{tikzpicture}
\begin{axis}[
    xlabel={Qubits},
    ylabel={Depth},
    xmin=2, xmax=14,
    ymin=1, ymax=10000,
    ytick={1, 10, 100, 1000, 10000},
    yticklabels={1, 10, 100, 1000, 10000},
    xtick={2, 4, 6, 8, 10, 12, 14},
    legend pos= north west,
    ymajorgrids=true,
    ymode = log,
    log basis y={10},
    xmajorgrids=true,
    axis x line*=top,
]
    \addlegendimage{only marks, mark=x, color=blue}
    \addlegendentry{State Prep.}
    \addlegendimage{only marks, mark=x, color= red}
    \addlegendentry{Low-rank}
    \addlegendimage{only marks, mark=x, color=olive}
    \addlegendentry{Theoretical}
    \addlegendimage{only marks, mark=x, color = black}
    \addlegendentry{d = 1}
    \addlegendimage{only marks, mark=o, color=black}
    \addlegendentry{d = 2} 
    \addlegendimage{only marks, mark=triangle, color=black}
    \addlegendentry{d = 3} 
\addplot[
    color=blue,
    mark=x,
    only marks,
    mark size = 3pt,
    ]
    coordinates {
    (4,26)(5,57)(6,120)
    };
\addplot[
    color=blue,
    mark=o,
    only marks,
    mark size = 3pt,
    ]
    coordinates {
    (8,438)(10,2036)
    };
\addplot[
    color=blue,
    mark=triangle,
    only marks,
    mark size = 3pt,
    ]
    coordinates {
    (12,8114)
    };
\addplot[
    color=red,
    mark=x,
    only marks,
    mark size = 3pt,
    ]
    coordinates {
    (4,6)(5,22)(6,23)
    };
\addplot[
    color=red,
    mark=o,
    only marks,
    mark size = 3pt,
    ]
    coordinates {
    (8,5)(10,22)
    };
\addplot[
    color=red,
    mark=triangle,
    only marks,
    mark size = 3pt,
    ]
    coordinates {
    (12,219)
    };
\addplot[
    color=olive,
    mark=x,
    only marks,
    mark size = 3pt,
    ]
    coordinates {
    (4,4)(5,5)(6,6)
    };
\addplot[
    color=olive,
    mark=o,
    only marks,
    mark size = 3pt,
    ]
    coordinates {
    (8,8)(10,10)
    };
\addplot[
    color=olive,
    mark=triangle,
    only marks,
    mark size = 3pt,
    ]
    coordinates {
    (12,12)
    };
   
\end{axis}
\begin{axis}[
        xmin=1, xmax=7,
        xtick={1, 2, 3, 4, 5, 6, 7},
        xticklabels={4, 16, 64, 16, 32, 16},
        ymin=0, ymax=70,
        hide y axis,
        axis x line*=bottom,
        xlabel={$n_x$}
    ]
  \end{axis}

\end{tikzpicture}}}{\caption{A comparison of circuit depth of two state preparation subroutines in the Tket gate set.}\label{fig:SP_comp}}
\capbtabbox{\scalebox{0.85}{
    \begin{tabular}{c c c c c}
\hline
Dimension & Discretisation steps & \multicolumn{2}{c}{Circuit Depth} & Reduction \\
 ($d$) &  ($n_x$) & tket & Low-rank & (\%) \\
\hline 
1 & 16 & 26 & 6 & 77\% \\
2 & 16 & 503 & 6 & 98.8\% \\
3 & 16 & 8179 & 238 & 97.0\% \\ 
\hline
\end{tabular}}}{\caption{Comparison of circuit depths for tket and low-rank state preparation subroutines across different test scenarios.} \label{tab:state_prep}}
    \end{floatrow}
\end{figure}

\section{Hoeffding's inequality and measurement analysis}
\label{app: Hoeff}
In this appendix, we provide the proof of Lemma~\ref{lem: Hoeff}. We then discuss how this compares to the error scaling with the number of shots, $N$.
\setcounter{theorem}{2}
\begin{lemma}[(Restated). Hoeffding's inequality]
    Given a quantum algorithm that produces the state 
    \begin{equation}
        |\Tilde{\mathbf{p}}\rangle = \frac{1}{\sqrt{\sum_{(\mathbf{x},t)\in G}\Tilde{p}(\mathbf{x})^2}}\sum_{(\mathbf{x})\in G} \Tilde{p}(\mathbf{x})|\mathbf{x}\rangle
    \end{equation}
    where $\mathbf{x} \in [0, n_x-1]^d$. To extract the probability distribution $\Tilde{p}(\mathbf{x})$ such that $||\Bar{\Tilde{p}}(\mathbf{x}) - \Tilde{p}(\mathbf{x})||_{\infty}\leq \epsilon_q$ with error probability $\delta$, $N$ measurements must be performed for each $n_x^d$ position, where $N \leq \dfrac{-\log(\delta/2)}{2\epsilon_q^2}$ and $\delta$ are defined by Hoeffding's inequality as follows
    \begin{equation}
        \mathrm{Prob}(||\Bar{\Tilde{\mathbf{p}}}_N - \Tilde{\mathbf{p}}||_{\infty}> \epsilon_q) \leq 2\mathrm{exp}\left(-2N\epsilon_q^2 \right)
    \end{equation}
    where $\Bar{\Tilde{p}}_N$ is the sampled mean over $N$ samples. 
\end{lemma}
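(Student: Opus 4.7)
The plan is to invoke the standard Hoeffding concentration inequality for bounded independent random variables and then invert the resulting tail bound to read off the sample size $N$.

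First I would fix a position $\mathbf{x}_0 \in [0, n_x-1]^d$ and, for the $i$-th of $N$ computational-basis measurements of $\ket{\tilde{\mathbf{p}}}$, introduce the Bernoulli indicator $X_i^{(\mathbf{x}_0)} = \mathbb{1}[\text{outcome}_i = \mathbf{x}_0]$. By the Born rule these are i.i.d.\ with $X_i^{(\mathbf{x}_0)} \in [0,1]$ and $\mathbb{E}[X_i^{(\mathbf{x}_0)}] = \tilde{p}(\mathbf{x}_0)$, identifying $\tilde{p}(\mathbf{x})$ with the outcome probabilities induced by $\ket{\tilde{\mathbf{p}}}$. The sample mean $\bar{\tilde{p}}_N(\mathbf{x}_0) := (1/N)\sum_{i=1}^N X_i^{(\mathbf{x}_0)}$ is then the natural empirical estimator, and Hoeffding's inequality for i.i.d.\ $[0,1]$-valued variables gives
\begin{equation*}
\mathrm{Prob}\bigl(\bigl|\bar{\tilde{p}}_N(\mathbf{x}_0) - \tilde{p}(\mathbf{x}_0)\bigr| > \epsilon_q\bigr) \leq 2\exp\!\bigl(-2N\epsilon_q^2\bigr),
\end{equation*}
which is precisely the tail bound claimed in the lemma, applied at whichever coordinate realises the infinity-norm deviation. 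Setting the right-hand side equal to $\delta$ and solving yields the stated threshold $N \geq -\log(\delta/2)/(2\epsilon_q^2)$.

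The main subtlety, and the only point where one has to tread carefully, is the passage from a per-coordinate guarantee to a genuinely uniform (infinity-norm) statement over all $n_x^d$ positions: a strict union bound would inflate $\delta$ to $n_x^d\delta$ and introduce a mild logarithmic overhead $N \geq \log(2n_x^d/\delta)/(2\epsilon_q^2)$. As the lemma statement treats $\delta$ as a per-position failure probability (\emph{``for each of the $n_x^d$ positions''}), no union bound is invoked and the proof reduces to the single-coordinate Hoeffding calculation above. A second minor point, which I would flag but not labour, is that the lemma implicitly assumes the classical post-processing step converts the measured frequencies (which estimate the Born probabilities) into the estimator $\bar{\tilde{p}}_N(\mathbf{x})$ of the amplitude-like object $\tilde{p}(\mathbf{x})$ on the same footing; with that identification the argument is a direct instance of the textbook Hoeffding bound.
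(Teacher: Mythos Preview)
Your proposal is correct and follows essentially the same route as the paper: state the general Hoeffding bound for independent $[a_i,b_i]$-valued variables, specialise to $a_i=0$, $b_i=1$, and rescale the sum to a sample mean to obtain $\mathrm{Prob}(|\bar{\tilde p}_N-\tilde p|>\epsilon_q)\le 2\exp(-2N\epsilon_q^2)$. Your write-up is in fact more explicit than the paper's, which does not identify the $X_i$ as measurement indicators or discuss the per-coordinate versus $\|\cdot\|_\infty$ distinction; those caveats you raise (union bound over $n_x^d$ positions, amplitude-versus-probability identification) are valid observations that the paper simply elides.
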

\begin{proof}
Hoeffding's inequality states: Let $X_1, ...,X_N$ be independent random variables such that $a_i \leq X_i \leq b_i$. Consider the sum of these random variables
\begin{equation}
    S_N = X_1 + ... + X_N.
\end{equation}
Then Hoeffding's theorem states that for all $t>0$,
\begin{equation}
    \mathrm{Prob}(|S_N - E[S_N]|\geq t) \leq 2\mathrm{exp}\left(-\dfrac{2t^2}{\sum_{i=1}^{N}(b_i-a_i)^2}\right).
\end{equation}
When $a_i=0$ and $b_i=1$ for all $i$, we get the inequality
\begin{equation}
    \mathrm{Prob}(|S_N - E[S_N]|\geq t) \leq 2\mathrm{exp}\left(-2t^2/N\right)
\end{equation}
which is equivalent to
\begin{equation}
    \mathrm{Prob}(|(S_N - E[S_N])/N|\geq t) \leq 2\mathrm{exp}\left(-2t^2N\right).
\end{equation}
\end{proof}

As discussed in~\cref{sec: results} we can test this bound. From \cref{fig: n_x 1d}, we know that for $n_x = 32$  $\epsilon_c = 0.027$ and we desire an $\epsilon = 0.03$. Therefore, we require $\epsilon_q \leq 0.003$. Lemma~\ref{lem: Hoeff}, introduced an upper bound for the number of shots required, ensuring a success probability of $1-\delta$. 
Considering the scenario, $n_x = 32$ and $d=1$, we plot the results in \cref{fig:shots_vs_eps}. We observe a plateau at $N = 50,000$, this is equivalent to a success probability of $1-\delta = 0.81$.  If a higher confidence is desired, we can choose a success probability $1-\delta = 0.9$. Using Lemma~\ref{lem: Hoeff}, this results in an upper bound on the number of samples required of $N = 167,000$.

\begin{figure}
    \centering
    \scalebox{0.8}{
    \begin{tikzpicture}
\begin{axis}[
    xlabel={$N$},
    ylabel={$\epsilon_q$},
    xmin=0, xmax=1,
    ymin=0.0001, ymax=0.1,
    xtick={0, 0.2, 0.4, 0.6, 0.8, 1},
    xticklabels={0, 20000, 40000, 60000, 80000, 100000},
    ytick={0.0001, 0.001, 0.01, 0.1},
    yticklabels={0.0001, 0.001, 0.01, 0.1},
    ymajorgrids=true,
    xmajorgrids=true,
    ymode = log,
    log basis y={10}
]
    \addplot+ [only marks]
    table {%
    0.05 0.002327
    0.1 0.001601
    0.5 0.00103
    1 0.00101
    0.01 0.003438
    0.005 0.0885326
    0.001 0.02738
    };
\end{axis}
\end{tikzpicture}}
    \caption{The quantum error, $\epsilon_q$, scaling with number of shots, $N$, for $n_x = 32$, $d=1$.}
    \label{fig:shots_vs_eps}
\end{figure}
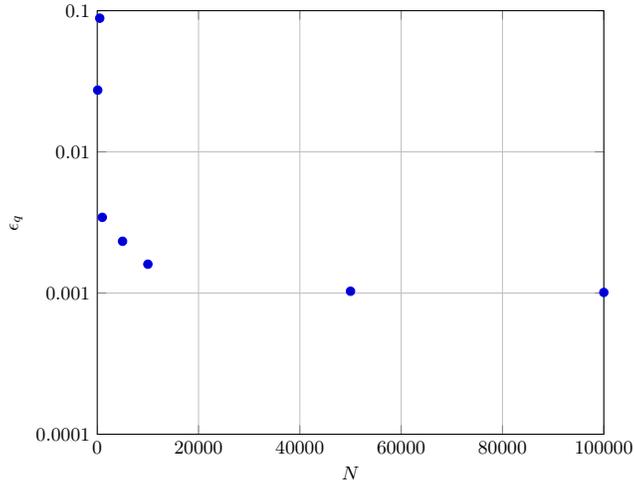

\section{Circuit output in two spatial dimensions}
\label{app: d2}
This appendix provides a validation of our quantum circuit for $d=2$. We compare the quantum and classical solution algorithms defined in~\cref{thm: QFT,thm: Classical_FFT} respectively. 
While the peaks are in the same locations, we can see that, in general, the quantum solution does not match the height of the classical one. This is due to the approximation in the \ac{FABLE} circuit, which, due to normalisation, reduces the overall height of the quantum result marginally. 

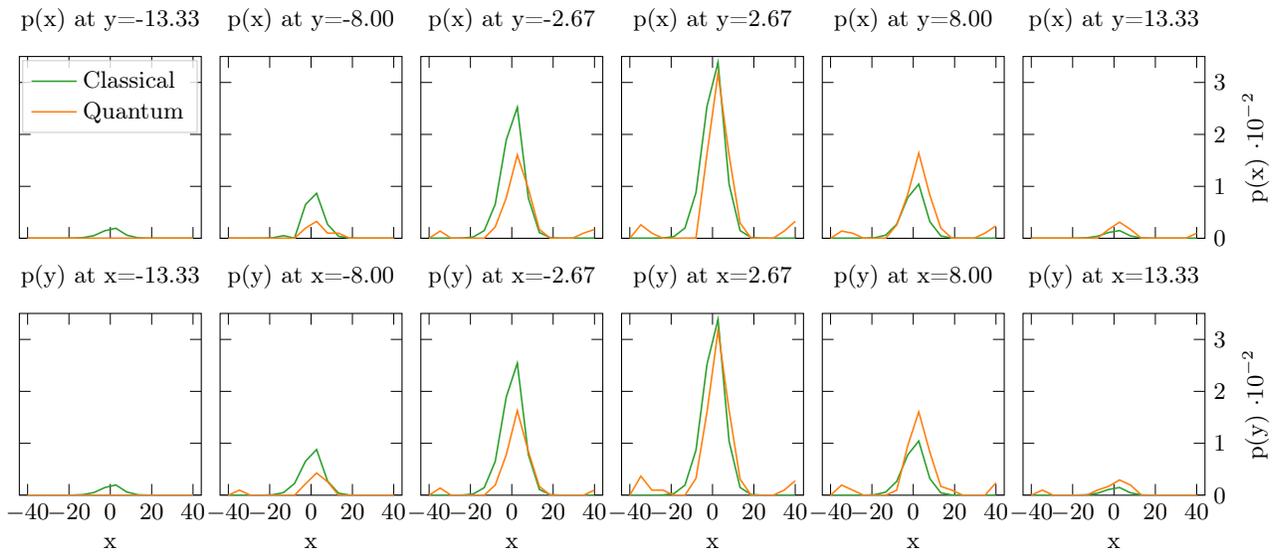
\begin{figure}
\noindent
\begin{tikzpicture}

\definecolor{darkgray176}{RGB}{176,176,176}
\definecolor{lightgray204}{RGB}{204,204,204}
\definecolor{crimson2143940}{RGB}{214,39,40}
\definecolor{darkgray176}{RGB}{176,176,176}
\definecolor{darkorange25512714}{RGB}{255,127,14}
\definecolor{forestgreen4416044}{RGB}{44,160,44}
\definecolor{lightgray204}{RGB}{204,204,204}
\definecolor{steelblue31119180}{RGB}{31,119,180}

\begin{groupplot}[group style={group size=6 by 2, 
            x descriptions at=edge bottom,
            y descriptions at=edge right,
            horizontal sep=0.25cm,
            vertical sep=1.0cm,}, 
            height = 4cm,
            width= 4cm,
            xlabel={x},
            xmin=-44, xmax=44,
            xtick style={color=black},
            ytick={0, 1, 2, 3},
            y grid style={darkgray176},
            ylabel={p(x) $\cdot 10^{-2}$},
            ymin=0, ymax=3.5,
            ytick style={color=black}]
\nextgroupplot[
title={p(x) at y=-13.33},legend cell align={left}, legend style={fill opacity=0.8, draw opacity=1, text opacity=1, draw=lightgray204}
]
\addplot [semithick, forestgreen4416044]
table {%
-40 0
-24 2.50339508056641e-04
-18.6666660308838 1.87158584594727e-03
-13.3333330154419 0.0113010406494141
-8 0.0502824783325195
-2.66666674613953 0.145196914672852
2.66666674613953 0.193405151367188
8 0.0595927238464355
13.3333330154419 8.47578048706055e-03
18.6666660308838 7.86781311035156e-04
29.3333339691162 0
40 0
};
\addlegendentry{Classical}
\addplot [semithick, darkorange25512714]
table {%
-40 0
40 0
};
\addlegendentry{Quantum}
\nextgroupplot[
title={p(x) at y=-8.00},
]
\addplot [semithick, forestgreen4416044]
table {%
-40 0
-24 1.09672546386719e-03
-18.6666660308838 8.40425491333008e-03
-13.3333330154419 0.050663948059082
-8 0.00225353240966797
-2.66666674613953 0.650703907012939
2.66666674613953 0.866734981536865
8 0.26707649230957
13.3333330154419 0.0379562377929688
18.6666660308838 3.50475311279297e-03
24 2.38418579101562e-04
40 0
};
\addplot [semithick, darkorange25512714]
table {%
-40 0
-8 0
-2.66666674613953 0.196301937103271
2.66666674613953 0.325524806976318
8 0.0981450080871582
13.3333330154419 0.0981450080871582
18.6666660308838 0
40 0
};

\nextgroupplot[
title={p(x) at y=-2.67},
]
\addplot [semithick, forestgreen4416044]
table {%
-40 0
-29.3333339691162 3.45706939697266e-04
-24 3.19480895996094e-03
-18.6666660308838 0.0244379043579102
-13.3333330154419 0.147378444671631
-8 0.655567646026611
-2.66666674613953 1.89297199249268
2.66666674613953 2.52143144607544
8 0.776946544647217
13.3333330154419 0.110423564910889
18.6666660308838 0.0101923942565918
24 6.91413879394531e-04
40 0
};
\addplot [semithick, darkorange25512714]
table {%
-40 0
-34.6666679382324 0.13880729675293
-29.3333339691162 0
-13.3333330154419 0
-8 0.219464302062988
-2.66666674613953 0.779044628143311
2.66666674613953 1.60378217697144
8 0.956642627716064
13.3333330154419 0.170004367828369
18.6666660308838 0
29.3333339691162 0
34.6666679382324 0.0981450080871582
40 0.170004367828369
};

\nextgroupplot[
title={p(x) at y=2.67},
]
\addplot [semithick, forestgreen4416044]
table {%
-40 0
-29.3333339691162 4.64916229248047e-04
-24 4.29153442382812e-03
-18.6666660308838 0.032806396484375
-13.3333330154419 0.197863578796387
-8 0.880157947540283
-2.66666674613953 2.54144668579102
2.66666674613953 3.38521003723145
8 1.04310512542725
13.3333330154419 0.148260593414307
18.6666660308838 0.0136733055114746
24 9.29832458496094e-04
34.6666679382324 0
40 0
};
\addplot [semithick, darkorange25512714]
table {%
-40 0
-34.6666679382324 0.259685516357422
-29.3333339691162 0.0981450080871582
-24 0
-13.3333330154419 0
-8 0.00427830219268799
-2.66666674613953 1.62763595581055
2.66666674613953 3.16979885101318
8 1.60378217697144
13.3333330154419 0.29444694519043
18.6666660308838 0
29.3333339691162 0
34.6666679382324 0.13880729675293
40 0.325524806976318
};

\nextgroupplot[
title={p(x) at y=8.00},
]
\addplot [semithick, forestgreen4416044]
table {%
-40 0
-24 1.32322311401367e-03
-18.6666660308838 0.0100970268249512
-13.3333330154419 0.0609159469604492
-8 0.270986557006836
-2.66666674613953 0.782489776611328
2.66666674613953 1.04227066040039
8 0.32116174697876
13.3333330154419 0.0456452369689941
18.6666660308838 4.20808792114258e-03
24 2.86102294921875e-04
40 0
};
\addplot [semithick, darkorange25512714]
table {%
-40 0
-34.6666679382324 0.13880729675293
-29.3333339691162 0.0981450080871582
-24 0
-13.3333330154419 0
-8 0.259685516357422
-2.66666674613953 0.844311714172363
2.66666674613953 1.63942575454712
8 0.832831859588623
13.3333330154419 0.196301937103271
18.6666660308838 0
29.3333339691162 0
34.6666679382324 0.0981450080871582
40 0.240421295166016
};

\nextgroupplot[
title={p(x) at y=13.33},
]
\addplot [semithick, forestgreen4416044]
table {%
-40 0
-24 1.9073486328125e-04
-18.6666660308838 1.43051147460938e-03
-13.3333330154419 8.65459442138672e-03
-8 0.0385046005249023
-2.66666674613953 0.111186504364014
2.66666674613953 0.148105621337891
8 0.0456333160400391
13.3333330154419 6.4849853515625e-03
18.6666660308838 5.96046447753906e-04
34.6666679382324 0
40 0
};
\addplot [semithick, darkorange25512714]
table {%
-40 0
-8 0
-2.66666674613953 0.170004367828369
2.66666674613953 0.310373306274414
8 0.170004367828369
13.3333330154419 0
34.6666679382324 0
40 0.0981450080871582
};


\nextgroupplot[
title={p(y) at x=-13.33},
]
\addplot [semithick, forestgreen4416044]
table {%
-40 0
-24 2.38418579101562e-04
-18.6666660308838 1.85966491699219e-03
-13.3333330154419 0.0113010406494141
-8 0.050663948059082
-2.66666674613953 0.147378444671631
2.66666674613953 0.197863578796387
8 0.0609159469604492
13.3333330154419 8.65459442138672e-03
18.6666660308838 7.98702239990234e-04
29.3333339691162 0
40 0
};
\addplot [semithick, darkorange25512714]
table {%
-40 0
40 0
};

\nextgroupplot[
title={p(y) at x=-8.00},
]
\addplot [semithick, forestgreen4416044]
table {%
-40 0
-24 1.07288360595703e-03
-18.6666660308838 8.27312469482422e-03
-13.3333330154419 0.0502824783325195
-8 0.225353240966797
-2.66666674613953 0.655567646026611
2.66666674613953 0.880157947540283
8 0.270986557006836
13.3333330154419 0.0385046005249023
18.6666660308838 3.55243682861328e-03
24 2.38418579101562e-04
40 0
};
\addplot [semithick, darkorange25512714]
table {%
-40 0
-34.6666679382324 0.0981450080871582
-29.3333339691162 0
-8 0
-2.66666674613953 0.219464302062988
2.66666674613953 0.427830219268799
8 0.259685516357422
13.3333330154419 0
40 0
};

\nextgroupplot[
title={p(y) at x=-2.67},
]
\addplot [semithick, forestgreen4416044]
table {%
-40 0
-29.3333339691162 3.33786010742188e-04
-24 3.09944152832031e-03
-18.6666660308838 0.0239014625549316
-13.3333330154419 0.145196914672852
-8 0.650703907012939
-2.66666674613953 1.89297199249268
2.66666674613953 2.54144668579102
8 0.782489776611328
13.3333330154419 0.111186504364014
18.6666660308838 0.0102519989013672
24 7.03334808349609e-04
40 0
};
\addplot [semithick, darkorange25512714]
table {%
-40 0
-34.6666679382324 0.13880729675293
-29.3333339691162 0
-13.3333330154419 0
-8 0.196301937103271
-2.66666674613953 0.779044628143311
2.66666674613953 1.62763595581055
8 0.844311714172363
13.3333330154419 0.170004367828369
18.6666660308838 0
34.6666679382324 0
40 0.0981450080871582
};

\nextgroupplot[
title={p(y) at x=2.67},
]
\addplot [semithick, forestgreen4416044]
table {%
-40 0
-29.3333339691162 4.41074371337891e-04
-24 4.13656234741211e-03
-18.6666660308838 0.0318288803100586
-13.3333330154419 0.193405151367188
-8 0.866734981536865
-2.66666674613953 2.52143144607544
2.66666674613953 3.38521003723145
8 1.04227066040039
13.3333330154419 0.148105621337891
18.6666660308838 0.0136613845825195
24 9.29832458496094e-04
34.6666679382324 0
40 0
};
\addplot [semithick, darkorange25512714]
table {%
-40 0
-34.6666679382324 0.367248058319092
-29.3333339691162 0.0981450080871582
-24 0.0981450080871582
-18.6666660308838 0
-13.3333330154419 0
-8 0.325524806976318
-2.66666674613953 1.60378217697144
2.66666674613953 3.16979885101318
8 1.63942575454712
13.3333330154419 0.310373306274414
18.6666660308838 0
29.3333339691162 0
34.6666679382324 0.0981450080871582
40 0.277614593505859
};

\nextgroupplot[
title={p(y) at x=8.00},
]
\addplot [semithick, forestgreen4416044]
table {%
-40 0
-24 1.27553939819336e-03
-18.6666660308838 9.8109245300293e-03
-13.3333330154419 0.0595927238464355
-8 0.26707649230957
-2.66666674613953 0.776946544647217
2.66666674613953 1.04310512542725
8 0.32116174697876
13.3333330154419 0.0456333160400391
18.6666660308838 4.20808792114258e-03
24 2.86102294921875e-04
40 0
};
\addplot [semithick, darkorange25512714]
table {%
-40 0
-34.6666679382324 0.196301937103271
-24 0
-13.3333330154419 0
-8 0.0981450080871582
-2.66666674613953 0.956642627716064
2.66666674613953 1.60378217697144
8 0.832831859588623
13.3333330154419 0.170004367828369
18.6666660308838 0.0981450080871582
24 0
34.6666679382324 0
40 0.240421295166016
};

\nextgroupplot[
title={p(y) at x=13.33},
ylabel={p(y) $\cdot 10^{-2}$}
]
\addplot [semithick, forestgreen4416044]
table {%
-40 0
-18.6666660308838 1.39474868774414e-03
-13.3333330154419 8.47578048706055e-03
-8 0.0379562377929688
-2.66666674613953 0.110423564910889
2.66666674613953 0.148260593414307
8 0.0456452369689941
13.3333330154419 6.4849853515625e-03
18.6666660308838 5.96046447753906e-04
34.6666679382324 0
40 0
};
\addplot [semithick, darkorange25512714]
table {%
-40 0
-34.6666679382324 0.0981450080871582
-29.3333339691162 0
-13.3333330154419 0
-8 0.0981450080871582
-2.66666674613953 0.170004367828369
2.66666674613953 0.29444694519043
8 0.196301937103271
13.3333330154419 0
40 0
};
\end{groupplot}

\end{tikzpicture}
\caption{The simulation results for the quantum and classical circuit for $d=2$, $n_x = 16$. Comparison in the $\infty$-norm gives $\epsilon_q = 0.01$.}
\label{fig: 2d_slices}
\end{figure}

\end{document}